\documentclass[letterpaper]{article}

\usepackage{fullpage}
\usepackage{amsthm}
\usepackage{url}
\usepackage{epsfig}
\usepackage{graphicx}
\usepackage{amsmath,amssymb}
\usepackage[usenames,dvipsnames]{color}
\usepackage[show]{notes}
\usepackage{algorithmic}
\usepackage[ruled,vlined,linesnumbered,commentsnumbered]{algorithm2e}
\usepackage{multirow}
\usepackage{epstopdf}
\usepackage{graphicx}
\usepackage{type1cm}
\usepackage{eso-pic}
\usepackage{color}
\usepackage{xspace}
\usepackage{hyperref}
\hypersetup{
	colorlinks,breaklinks,
	urlcolor=[rgb]{0,0.25,0.75},
	linkcolor=[rgb]{0,0.25,0.75},
	citecolor=[rgb]{0,0.25,0.75}
}

\newcommand{\eat}[1]{}

\newtheorem{definition}{Definition}

\newtheorem{lemma}{Lemma}
\newtheorem{theorem}{Theorem}

\def\0{{\mathbf 0}}
\def\1{{\mathbf 1}}

\newcommand{\ra}{\mbox{$\rightarrow$}}

\newcommand{\reals}{\mathbb{R}}
\renewcommand{\P}{\mathsf{P}}

\DeclareMathOperator*{\argmax}{argmax}
\newcommand{\SPhard}{\text{\#P}-hard\xspace}
\newcommand{\NPhard}{\text{NP}-hard\xspace}

\newcommand{\MIP}{\mathsf{MIP}}
\newcommand{\MIIA}{\mathsf{MIIA}_\tau}

\newcommand{\InfSet}{InfSet}
\newcommand{\Nin}{\mathcal{N}^{\mathit{in}}}
\newcommand{\EN}{\tilde{\mathcal{N}}^{\mathit{in}}}
\newcommand{\ev}{\mathcal{E}}

\newcommand{\hept}{NetHEPT}
\newcommand{\epi}{Epinions}
\newcommand{\dblp}{DBLP}

\newcommand{\wiki}{WikiVote}

\title{
Time-Critical Influence Maximization in Social Networks with Time-Delayed Diffusion Process\footnote{A preliminary version of this paper appears in {\em Proc.\ of the Twenty-Sixth AAAI Conference on Artificial Intelligence (AAAI 2012)}. The copyright of that conference version is held by the Association for the Advancement of Artificial Intelligence (AAAI).}
}
\author{
\begin{tabular}{ccc}
 	Wei Chen & Wei Lu\footnote{Part of this work was done while Wei Lu was a visiting student at Microsoft Research Asia. Support in part by The Natural Sciences and Engineering Research Council of Canada (NSERC).} \\
	Microsoft Research Asia & University of British Columbia \\
	Beijing, 100080, China & Vancouver, B.C. V6T 1Z4, Canada \\
	\texttt{weic@microsoft.com}\vspace{5pt} & \texttt{welu@cs.ubc.ca}\vspace{5pt} \\
\end{tabular}\\
\begin{tabular}{cc}
	Ning Zhang\footnote{This work was done while Ning Zhang was a visiting student at Microsoft Research Asia.} \\
	Univ. of Science and Technology of China  \\
	Hefei, Anhui, 230026, China \\
	\texttt{lemonustc@gmail.com} \\
\end{tabular}\\
}

\begin{document}

\maketitle

\begin{abstract}
Influence maximization is a problem of finding a small set of
	highly influential users, also known as seeds, in a social network such that the spread of influence
	under certain propagation models is maximized. 
In this paper, we consider time-critical influence maximization,
	in which one wants to maximize influence spread within a given
	deadline.
Since timing is considered in the optimization, we also extend the Independent
	Cascade (IC) model and the Linear Threshold (LT) model to
	incorporate the time delay aspect of
	influence diffusion among individuals in social networks. 
We show that time-critical influence maximization
	under the time-delayed IC and LT models maintains desired properties such as 
	submodularity, which allows a greedy approximation algorithm to 
	achieve an approximation ratio of $1-1/e$.
To overcome the inefficiency of the greedy algorithm, we design two heuristic
	algorithms: the first one is based on
	a dynamic programming procedure that computes exact influence
	in tree structures and directed acyclic subgraphs, while the second one
	converts the problem to one in the original models and then applies
	existing fast heuristic algorithms to it.
Our simulation results demonstrate that our algorithms achieve the
	same level of influence spread as the greedy algorithm while running
	a few orders of magnitude faster, and they also outperform existing
	fast heuristics that disregard the deadline constraint and delays in
	diffusion.
\end{abstract}

\newpage
\section{Introduction}\label{sec:intro}
Recently, the rapidly increasing popularity of online social networking
	sites such as Facebook, Twitter, and Google+ opens up
	great opportunities for large-scale viral marketing campaigns.
Viral marketing, first introduced to the data mining community
	by Domingos and Richardson~\cite{domingos01},
 	is a cost-effective marketing strategy that promotes products
	by giving free or discounted items to a selected group of
	highly influential individuals (seeds), in the hope that through
	the word-of-mouth effects,
	a large number of product adoption will occur.

Motivated by viral marketing, \emph{influence maximization} emerges as a fundamental
	data mining problem concerning the propagation of ideas, opinions, and
	innovations through social networks.
In their seminal paper, Kempe et al.~\cite{kempe03} formulated
	influence maximization as a problem in discrete optimization:
Given a network graph $G$ with pairwise user influence probabilities
	on edges, and a positive number $k$, find $k$ users, such
	that by activating them initially, the expected spread of influence
	is maximized under certain propagation models.
Two classical propagation models studied in the the literature
	are the \emph{Independent Cascade} (IC) and the
	\emph{Linear Threshold} (LT) model. 

The family of models considered in Kempe et al.
	and its follow-ups, including IC and LT, do not fully incorporate important
	{\em temporal} aspects that have been well observed in the dynamics of
	influence diffusion.
First, the propagation of influence from one person to another
	may incur a certain amount of \emph{time delay},
	which is evident from recent studies by 
	statistical physicists on empirical social networks.
Iribarren and Moro~\cite{moro2009}
	observed that the spread of influence in social networks
	slows down due to heterogeneity in human activities.
Karsai et al.~\cite{barabasi2011} reported similar observations
	and attributed such slow-down to the bursty
	nature of human interactions and the topological correlations
	in networks with the small-world property.

Second, the spread of influence may be \emph{time-critical} in practice.
In a certain viral marketing campaign, it might be the case that
	the company wishes to trigger a large volume of product
	adoption in a fairly short time frame, e.g., a three-day sale.
As a motivating example, let us suppose that Alice has bought an
	Xbox 360 console and Kinect with a good discount, but the
	deal would only last for three days.
Alice wanted to recommend this deal to Bob, but whether her recommendation
	would be effective depends on whether Alice and Bob can be in touch (e.g., meeting in person, or Bob seeing the message left by Alice on Facebook) before the discount expires.
Therefore, when we try to maximize the spread of influence for a viral marketing campaign facing this kind of scenarios, we need to take both the time delay aspect of influence diffusion and the time-critical constraint of the campaign into consideration.

To this end, we extend the influence maximization problem to
	have a \emph{deadline constraint} to reflect the
	time-critical effect.
We also propose two new propagation models, the
	\emph{Independent Cascade model with Meeting events (IC-M)}
	and the {\em Linear Threshold model with Meeting events (LT-M)}
	to capture the delay of propagation in time.
We show that both the IC-M and LT-M models maintain desired properties, namely monotonicity and submodularity, which implies a greedy $(1-1/e)$-approximation algorithm in spite of the \NPhard{ness} of the problem.

For the IC-M model, we design two efficient and effective heuristic algorithms, MIA-M and MIA-C, based on the notion of Maximum Influence Arborescence (MIA)~\cite{ChenWW10} to tackle time-critical influence maximization.
On the other hand, similarly for the LT-M model, we adapt the notion Local Directed Acyclic Graph (LDAG) proposed in~\cite{ChenYZ10}
	to obtain the LDAG-M heuristic algorithm.
Our experiments evaluate performance of various algorithms for the IC-M model, including the greedy approximation algorithm, MIA-M, and MIA-C heuristics.
The empirical results demonstrate that our heuristic algorithms produce seed sets with equally good quality as those mined by approximation algorithm, while being two to three orders of magnitude faster.
Moreover, we show that only using standard heuristics such as MIA and 
	disregarding time delays and deadline constraint 
	could result in poor influence spread compared to
	our heuristics that are specifically designed
	for this context.

\subsection{Related Work}\label{sec:related}
Domingos and Richardson~\cite{domingos01,richardson02}
	first posed influence maximization as an algorithmic problem.
They modeled the problem using Markov random fields and proposed
	heuristic solutions.
Kempe et al.~\cite{kempe03} studied influence maximization
	as a discrete optimization problem.
They showed that the problem is \NPhard{} under both
	the IC and LT models, and relied on submodularity
	to obtain a $(1-1/e)$ greedy approximation
	algorithm.

A number of studies following \cite{kempe03} developed more efficient
	and scalable solutions, including the
	cost-effective lazy forward (CELF) optimization~\cite{Leskovec07}
	and also work by Kimura et al.~\cite{KimuraSN07},
	Chen et al.~\cite{ChenWY09}, etc.
Specifically for the IC model, Chen et al.~\cite{ChenWW10}
	showed that it is \SPhard to compute the exact influence of any node
	set in general graphs.
They proposed the MIA model which uses influence in local tree structures
	to approximate influence propagated through the entire network.
They then developed scalable algorithms to compute exact
	influence in trees and mine seed sets with equally good quality
	as those found by the approximation algorithm.
For the LT model, it is also \SPhard to compute the exact influence in general graphs~\cite{ChenYZ10}.
To circumvent that,
	Chen et al.~\cite{ChenYZ10} proposed to use
	local directed acyclic graphs (LDAG) to
	approximate the influence regions of nodes.

More recently, Chen et al.~\cite{ChenNegOpi11} extended the IC model
	to capture the propagation of negative opinions in information diffusion.
Goyal, Bonchi, and Lakshmanan~\cite{GoyalBL11} leveraged real propagation
	traces to derive more accurate diffusion models.
In~\cite{mintime}, Goyal et al. studied the problem of MINTIME in influence
	maximization.
In MINTIME, an influence spread threshold $\eta$ and a budget threshold $k$ are given,
	and the task is to find a seed set of size at most $k$ such that by activating it,
	at least $\eta$ nodes are activated in expectation in the minimum possible time.
This problem also considers timing in influence maximization, but the number of
	time steps in MINTIME is the optimization objective, whereas in our case,
	time is given as a deadline constraint.

The time-delay phenomena in information diffusion
	has been explored in statistical physics.
Iribarren and Moro~\cite{moro2009} observed from a large-scale
	Internet viral marketing experiment in Europe that the dynamics of
	information diffusion are controlled by the heterogeneity
	of human activities.
More recently, using time-stamped phone call records, 
	Karsai et al.~\cite{barabasi2011} found that the spreading
	speed of information on social networks is much slower than one
	may expect, due to various kinds of correlations, such as
	community structures in the graph, weight-topology correlations, and
	bursty event on single edges.

\section{Time-Critical Influence Maximization for Time-Delayed Independent Cascade Model}\label{sec:model}
\subsection{Model and Problem Definition}\label{sec:probDef}

We first describe the standard Independent Cascade (IC) model in Kempe et al.~\cite{kempe03}, 
	and then show our extensions that incorporate deadline and random meeting events.
In the IC model, a social network is modeled as a directed graph $G=(V,E)$,
	where $V$ is the set of nodes representing 
	users and $E$ is the set of directed
	edges representing links (relationship) between users.
Each edge $(u,v)\in E$ is associated with an influence probability $p(u,v)$ 
	defined by function $p:E\rightarrow [0,1]$.
If $(u,v)\not\in E$, define $p(u,v)=0$.

The diffusion process under the IC model proceeds in discrete
	time steps $0,1,2,\dotso$.
Initially, a {\em seed set} $S\subseteq V$ is targeted and activated 
	at step $0$, while all other nodes are inactive.
At any step $t\ge 1$, any node $u$ activated at step $t-1$ is given a 
	single chance to activate any of its currently inactivate neighbors $v$
	with independent success probability $p(u,v)$.
Once a node is activated, it stays active.
The process continues until no new nodes can be activated.
The influence maximization problem under the IC model is to find a seed
	set $S$ with at most $k$ nodes such that the expected number of
	activated nodes after the diffusion terminates, 
	called {\em influence spread} and denoted by $\sigma(S)$, is maximized.


We now describe our extension to the IC model to incorporate time-delayed
	influence diffusion, which we denote by IC-M (for Independent Cascade with
	Meeting events). 
In the IC-M model, each edge $(u,v)\in E$ is also associated with a \emph{meeting probability} $m(u,v)$
	defined by function $m:E\rightarrow [0,1]$
	(if $(u,v) \notin E$, $m(u,v)=0$).
As in IC, a seed set $S$ is targeted and activated at step $0$.
At any step $t \ge 1$, an active node $u$ meets any of 
	its currently inactive neighbors $v$ independently with probability $m(u,v)$.
If a meeting event occurs between $u$ and $v$ for the \emph{first} time,
	$u$ is given a \emph{single} chance to try activating $v$,
	with an independent success probability $p(u,v)$.
If the attempt succeeds, $v$ becomes active at step $t$ and will start 
	propagating influence at $t+1$.
The diffusion process quiesces when all active nodes have met with all their neighbors
	and no new nodes can be activated.

Several possibilities can be considered in mapping the meeting events
	in the IC-M model to real actions in online social networks.
For instance, a user $u$ on Facebook posting a message on her
	friend $v$'s wall can be considered as a meeting event.
In the IC-M model, the meeting probabilities are not necessarily the
	same for different pair of users.
Different pairs of friends may have different frequencies of exchanging
	messages on each other's walls, which is reflected by the
	meeting probability. 

Note that the original IC model is a special case of IC-M with 
	$m(u,v)=1$ for all edges $(u,v)\in E$.
More importantly, for the original influence maximization problem, the meeting
	probability is not essential, because as long as $m(u,v)>0$, eventually
	$u$ will meet with $v$ and try to influence $v$ once.
Thus, if we only consider the overall influence in the entire run, there
	would be no need to introduce meeting probabilities.
However, if we consider influence within a \emph{deadline constraint}, then
	meeting probability is an important factor in determining the optimal
	seed set.

Formally, for a deadline $\tau\in \mathbb{Z}_+$, we define $\sigma_\tau\colon 2^V\to \reals_+$
	to be the set function such that $\sigma_\tau(S)$ with $S\subseteq V$
	is the expected number of activated nodes by the end of time step $\tau$ under
	the IC-M model, with $S$ as the seed set.
The {\em time-critical influence maximization with a deadline constraint $\tau$} 
	is the problem of finding the seed set $S$ with at most $k$ seeds such that
	the expected number of activated nodes by step $\tau$ is maximized, 
	i.e., finding $S^* = \arg\max_{S\subseteq V, |S|\le k} \sigma_\tau(S)$.

Note that the original influence maximization problem is
	\NPhard{} for the IC model~\cite{kempe03}
	and that problem is a special case of time-critical influence
	maximization for the IC-M model with all $m(u,v) = 1$
	and deadline constraint $\tau=|V|$.
This leads to the following hardness result.

\begin{theorem}\label{theorem:hard}
The time-critical influence maximization problem is \NPhard{}
	for the IC-M model.
\end{theorem}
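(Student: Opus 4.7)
The plan is to prove hardness by a straightforward polynomial-time reduction from the standard IC influence maximization problem, which Kempe et al. have already shown to be \NPhard. The paragraph immediately preceding the theorem essentially outlines this reduction, so the task is to make the argument rigorous by verifying that the two problem instances really do coincide under the stated parameter choices.

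First, I would take an arbitrary instance of IC influence maximization: a directed graph $G=(V,E)$ with edge probabilities $p\colon E\to[0,1]$ and a budget $k$. I would construct an instance of time-critical influence maximization under IC-M on the same graph $G$, keeping the same $p$, setting $m(u,v)=1$ for every $(u,v)\in E$, and choosing the deadline $\tau=|V|$. This transformation is clearly polynomial-time (in fact linear in the size of $E$).

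Next, I would argue that for every seed set $S\subseteq V$ with $|S|\le k$, the IC-M expected spread $\sigma_\tau(S)$ equals the classical IC spread $\sigma(S)$. Because $m(u,v)=1$, whenever a node $u$ becomes active at step $t-1$, it meets each currently inactive neighbor $v$ for the first time at step $t$ with probability $1$, so $u$ gets its one activation attempt at $v$ with success probability $p(u,v)$ at step $t$; this is exactly the IC activation rule. It remains to observe that the deadline $\tau=|V|$ does not cut the process short: under IC (and therefore under IC-M with $m\equiv 1$), the set of active nodes strictly grows each step until the process quiesces, so the diffusion terminates by step $|V|-1 < \tau$. Hence no expected activation is lost by imposing the deadline, giving $\sigma_\tau(S)=\sigma(S)$.

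From this identity it follows that $S^*$ maximizes $\sigma_\tau$ over $k$-element subsets if and only if it maximizes $\sigma$ over $k$-element subsets. A polynomial-time algorithm for time-critical influence maximization under IC-M would therefore yield one for IC influence maximization, contradicting the \NPhard{ness} established in Kempe et al.~\cite{kempe03}. The only non-routine point to be careful about is the termination-by-$\tau$ observation in the previous paragraph; everything else is bookkeeping.
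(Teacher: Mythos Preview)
Your proposal is correct and follows exactly the approach the paper uses: the paper proves the theorem by the single observation (in the paragraph immediately preceding the statement) that standard IC influence maximization is the special case of IC-M with $m\equiv 1$ and $\tau=|V|$, and you have simply spelled out the verification that $\sigma_\tau(S)=\sigma(S)$ in that case. There is nothing to add; your treatment is, if anything, more detailed than the paper's.
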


\subsection{Properties of the IC-M Model}
Although to find the optimal solution for time-critical influence maximization with deadline $\tau$ under IC-M is \NPhard{} (Theorem~\ref{theorem:hard}), we show that the influence function $\sigma_\tau(\cdot)$ is monotone and submodular, which allows a hill-climbing-style greedy algorithm to achieve a $(1-1/e)$-approximation to the optimal.

Given a ground set $U$, a set function $f\colon2^U\to \reals$ is \textsl{monotone} if $f(S_1) \leq f(S_2)$ whenever $S_1 \subseteq S_2$. 
Also, the function is \textsl{submodular} if $f(S_1 \cup \{w\}) - f(S_1) \geq f(S_2 \cup \{w\}) - f(S_2)$, $\forall S_1 \subseteq S_2$, $\forall w \in U \setminus S_2$.
Submodularity captures the law of diminishing marginal returns, a well-known principle in economics. 

\begin{theorem}\label{thm:submod}
	The influence function $\sigma_\tau(\cdot)$ is monotone and 
	submodular for an arbitrary instance of the IC-M model, 
	given any deadline constraint $\tau \ge 1$.
\end{theorem}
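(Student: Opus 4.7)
The plan is to follow the ``possible worlds'' technique of Kempe, Kleinberg, and Tardos, generalized so that each edge carries not only a live/blocked bit but also a random meeting delay. Specifically, for each edge $(u,v)\in E$ I would independently sample (i)~a Bernoulli $X_{uv}$ with success probability $p(u,v)$, which decides whether the single activation attempt of $u$ on $v$ would succeed, and (ii)~a geometric random variable $T_{uv}\in\{1,2,\ldots\}\cup\{\infty\}$ with parameter $m(u,v)$ (with $T_{uv}=\infty$ if $m(u,v)=0$), representing the number of steps between $u$'s activation and the first meeting event on $(u,v)$. Call any realization $\omega$ of all these $(X_{uv},T_{uv})$ a \emph{possible world}, and let $\mathbb{P}[\omega]$ denote its probability. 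Writing $A_\tau(S,\omega)$ for the set of nodes activated by time $\tau$ starting from seeds $S$ in world $\omega$, we have $\sigma_\tau(S)=\sum_\omega \mathbb{P}[\omega]\,|A_\tau(S,\omega)|$, so it suffices to show that $|A_\tau(\cdot,\omega)|$ is monotone and submodular for every fixed $\omega$, since nonnegative linear combinations preserve both properties.

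The key structural claim is that in a fixed world $\omega$, $A_\tau(S,\omega)$ admits a purely static description: construct the weighted digraph $G_\omega$ on $V$ whose edges are exactly the pairs $(u,v)$ with $X_{uv}=1$, with edge length $T_{uv}$; then $v\in A_\tau(S,\omega)$ if and only if the shortest-path distance $d_\omega(S,v):=\min_{s\in S}d_\omega(s,v)$ is at most $\tau$. This equivalence is proved by induction on the activation time: a node $v$ is first activated at step $t$ iff some $u$ activated at step $t'<t$ has $X_{uv}=1$ and $T_{uv}=t-t'$, which unfolds into a least-sum-of-delays path in $G_\omega$. The memorylessness of the geometric distribution justifies sampling the delays $T_{uv}$ up front independently of when (and whether) $u$ actually becomes active, so this static view faithfully reproduces the IC-M dynamics.

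With this representation, monotonicity and submodularity reduce to elementary facts about shortest-path reachability within distance $\tau$ in a fixed weighted digraph. Monotonicity is immediate because adding a source can only decrease $d_\omega(S,v)$. For submodularity, fix $S_1\subseteq S_2$ and $w\notin S_2$, and consider any $v\in A_\tau(S_2\cup\{w\},\omega)\setminus A_\tau(S_2,\omega)$: since adding $w$ is what brought $v$ within distance $\tau$, we must have $d_\omega(w,v)\le\tau$, which then also gives $v\in A_\tau(S_1\cup\{w\},\omega)$; moreover $d_\omega(S_1,v)\ge d_\omega(S_2,v)>\tau$ shows $v\notin A_\tau(S_1,\omega)$, yielding the required containment of marginal gain sets. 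I expect the main obstacle, and the only step requiring real care, to be justifying the equivalence in the second paragraph, in particular handling the case $T_{uv}=\infty$ (when $u$ never meets $v$) and verifying that the joint distribution of the first meeting times generated from independent geometric $T_{uv}$'s matches the stepwise meeting process of IC-M; the submodularity argument itself is then routine.
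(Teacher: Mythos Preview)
Your proposal is correct and follows the same high-level strategy as the paper: fix all randomness up front via deferred decisions, reduce to a deterministic reachability question in each possible world, and conclude by taking a nonnegative linear combination. The one substantive difference is in how the meeting randomness is encoded. The paper flips an independent Bernoulli coin for every pair $(\text{edge},\text{time step})$ with $t\in[1,\tau]$; in such a world the ``hop count'' of an edge $(u,v)$ is the first time after $t_u$ at which the $(u,v)$-coin comes up heads, so it depends on when $u$ is reached. Your encoding collapses these $\tau$ coins into a single geometric delay $T_{uv}$, which---thanks to memorylessness, as you note---is independent of $t_u$ and yields a \emph{static} edge-weighted digraph $G_\omega$. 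This buys you a cleaner argument: activation time is literally shortest-path distance in $G_\omega$, so $d_\omega(S,v)=\min_{s\in S}d_\omega(s,v)$ is immediate and the submodularity of ``distance $\le\tau$'' follows in two lines. The paper's per-timestep encoding still works, but the analogous identity $t_v(S)=\min_{s\in S}t_v(\{s\})$ requires an extra monotonicity observation about the ``first meeting after $t$'' map that the paper leaves implicit. It is also worth noting that the paper itself switches to your geometric-delay viewpoint later (Lemma~\ref{lem:aput} and Section~\ref{sec:chain}) when computing activation probabilities on chains, so your representation is fully consistent with the rest of the development.
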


To prove the theorem, we can view the random cascade process
	under IC-M using the ``possible world'' semantics and the
	principle of deferred decisions.
That is, we can suppose that before the cascade starts, a set of outcomes for all meeting events, as well as the ``live-or-blocked'' identity for all edges are already determined but not yet revealed.

More specifically, for each meeting event (a $(u, v)$ pair and a time step $t \in [1,\tau]$), we flip a coin with bias $m(u,v)$ to determine if $u$ will meet $v$ at $t$.
Similarly, for each edge $(u,v) \in E$, we flip once with bias $p(u,v)$, and we declare the edge ``live'' with probability $p(u,v)$, or ``blocked'' with probability $1-p(u,v)$.
All coin-flips are independent.
The identity of the edge $(u,v)$ is revealed \emph{in the event} that $u$ is active and is meeting the inactive $v$ for the first time.
Therefore, a certain set of outcomes of all coin flips corresponds to one {\em possible world}, denoted by $X$, which is a deterministic graph (with all blocked edges removed) obtained by conditioning on that particular set of outcomes.

Now we prove Theorem~\ref{thm:submod}.

\begin{proof}[Proof of Theorem~\ref{thm:submod}]
Fix a set $X_M$ of outcomes of all meeting events ($\forall (u,v)\in E$, $\forall t\in [0,\tau]$), and also a set $X_E$ of live-or-blocked identities for all edges.
Since the coin-flips for meeting events and those for live-edge selections are orthogonal, and all flips are independent, any $X_E$ on top of an $X_M$ leads to a possible world $X$.

Next, we define the notion of ``reachability'' in $X$.
Consider a live edge $(u,v)$ in $X$.
Traditionally, without meeting events, $v$ is reachable from $u$ via just one hop.
	Now with pre-determined meeting sequences, $v$ is reachable from $u$ via $t_v - t_u$ hops, where $t_u$ is the step in which $u$ itself is reached, and $t_v$ is the first step when $u$ meets $v$, after $t_u$.
	Hence, we say that $v$ is \emph{reachable} from a seed set $S$ if and only if (1) there exists at least one path consisting entirely of live edges (called live-path) from some node in $S$ to $v$, and (2) the \emph{collective number of hops} along the \emph{shortest} live-path from $S$ to $v$
is no greater than $\tau$.

Then, let $\sigma^X_\tau(S)$ be the number of nodes reachable from $S$ in $X$ (by the reachability definition above).
Let $S_1$ and $S_2$ be two arbitrary sets such that $S_1\subseteq S_2\subseteq V$, and let node $w \in V \setminus S_2$ be arbitrary.
The monotonicity of $\sigma^X_\tau(\cdot)$ holds, since if some node $u$ can be reached by $S_1$, the source of the live-path to $u$ must be also in $S_2$.
As for submodularity, consider a certain node $u$ which is reachable from $S_2 \cup \{w\}$ but not from $S_2$.
This implies (1) $u$ is not reachable from $S_1$ either, and (2) the source of the live-path to $u$ must be $w$.
Hence, $u$ is reachable from $S_1 \cup \{w\}$ but not from $S_1$.
This gives $\sigma^X_\tau(S_1\cup\{w\})-\sigma^X_\tau(S_1) \geq \sigma^X_\tau(S_2\cup\{w\})-\sigma^X_\tau(S_2)$.

Let $\mathcal{E}_I$ denote the event that $I$ is the true realization (virtually) of the corresponding random process.
Taking the expectation over all possible worlds, we have $\sigma_\tau(S) = \sum_{X} \Pr[\mathcal{E}_X] \cdot \sigma^X_\tau(S), \forall S\subseteq V$, where $X$ is any combination of $X_E$ and $X_M$, and $\Pr[\mathcal{E}_X] = \Pr[\mathcal{E}_{X_E}]\cdot\Pr[\mathcal{E}_{X_M}]$.
Therefore, $\sigma(\cdot)$ is a nonnegative linear combination of monotone and submodular functions, which is also monotone and submodular.
This was to be shown.
\end{proof}

\section{Time-Critical Influence Maximization under Time-Delayed Linear Threshold Model}\label{sec:ltmodel}
\subsection{Model and Problem Definition}

In the LT model, a social network is modeled as a directed
	graph $G=(V,E)$, where $V$ is the set of nodes
	representing individuals and $E$ is the set of directed
	edges representing links (relationships, ties, etc.) between
	individuals.
Each edge $(u,v)\in E$ is associated with an influence weight
	$b(u,v)$ defined by function $b\colon E\to [0,1]$.
We require that $\sum_{u}b(u,v) \le 1$, $\forall v\in V$.
If $(u,v)\not\in E$, define $b(u,v) = 0$.
Each node $v$ chooses a {\em threshold} $\theta_v$
	uniformly at random from the interval $[0,1]$,
	which represents the weighted fraction of $v$'s
	in-neighbors that must be active so as to let
	$v$ become active.

Given the random choices of node thresholds, and a
	seed set $S\subseteq V$, the dynamics of the
	diffusion process under the LT model
	proceed in discrete time steps $0,1,2,\dotso$.
Initially, at step $0$, nodes in $S$ are activated.
Then at any step $t \ge 1$, nodes who have been
	active in previous steps remain active, and
	we activate any node $v$ such that its threshold
	$\theta_v$ is surpassed by the total weights
	of its currently active in-neighbors.
That is, $\sum_{\text{active }\;u\in \Nin(v)} b(u,v) \ge \theta_v$.
The process continues until no new nodes
	can be activated.
The influence maximization problem under the LT
	model is to find a seed set $S$ with $|S|\le k$
	such that the expected number of activated
	nodes after the diffusion process terminates,
	called {\em influence spread} and denoted by $\sigma(S)$,
	is maximized.

We now describe our extension to the LT model
	that incorporates time-delayed diffusion processes,
	which we call LT-M (Linear Threshold model with
	Meeting events).
In the LT-M model, each edge $(u,v)\in E$ is also
	associated with a {\em meeting probability}
	$m(u,v)$ defined by function $m\colon E\to [0,1]$.
Note that $m(u,v) = 0$ if $(u,v)\not\in E$.
Same as in LT, nodes choose a uniform random number
	out of $[0,1]$ as threshold, and a seed set
	$S$ is targeted and activated at step $0$.

At any step $t\ge 1$, each active node $u$ remains active and
	it meets any of its currently inactive out-neighbors $v$
	with probability $m(u,v)$, independently.
If since $u$'s activation, $u$ and $v$ meet for the first
	time at $t$, then we say $u$'s influence weight to $v$
	is effective.
An inactive $v$ becomes active if the total
	effective weight from its active in-neighbors is at least $\theta_v$.
That is,
$$
\sum_{u\in \EN(v)} b(u,v) \ge \theta_v,
$$
	where $\EN(v) \subseteq \Nin(v)$ denotes the set
	of effective in-neighbor of $v$.
After $v$ becomes active, it starts to meet her neighbors
	from the next time step.
The process quiesces when all active nodes have met
	with their out-neighbors and no new nodes can
	be activated.

Note that the original influence maximization problem is \NPhard\
	for the LT model~\cite{kempe03} and that is a special
	case of time-critical influence
	maximization under LT-M with all $m(u,v)$ = 1 and
	deadline $\tau = |V|$.
This leads us to the following hardness result.

\begin{theorem}\label{thm:lthard}
The time-critical influence maximization problem
	is \NPhard\ for the LT-M model.
\end{theorem}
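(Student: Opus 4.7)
The proof is essentially a reduction that formalizes the remark made in the paragraph immediately preceding the theorem statement. The plan is to show that an arbitrary instance of the classical LT influence maximization problem, which is known to be \NPhard{} by Kempe et al.~\cite{kempe03}, can be expressed as an instance of time-critical influence maximization under the LT-M model in polynomial time, so that hardness transfers.

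First, I would describe the reduction. Given any LT instance consisting of a directed graph $G=(V,E)$, influence weights $b\colon E\to[0,1]$ satisfying $\sum_u b(u,v)\le 1$ for every $v$, and a budget $k$, construct an LT-M instance on the same $G$ and $b$ by setting $m(u,v)=1$ for every $(u,v)\in E$, and setting the deadline to $\tau=|V|$. This transformation is clearly polynomial.

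Next, I would argue that the LT-M dynamics under this parameter choice coincide step-by-step with the LT dynamics on the same seed set $S$. Since $m(u,v)=1$ deterministically, whenever a node $u$ becomes active at step $t$, it meets every currently-inactive out-neighbor $v$ for the first time exactly at step $t+1$, so $u$'s weight $b(u,v)$ becomes effective at $t+1$. By induction on $t$, the effective in-neighbor set $\EN(v)$ used to test activation of $v$ at step $t$ in LT-M is precisely the set of in-neighbors of $v$ that are active strictly before $t$, which matches the quantity used in standard LT. Consequently, under any fixed draw of thresholds $\{\theta_v\}$, exactly the same nodes are activated at each step in both models. Furthermore, the LT process has the property that on any run that does not terminate at step $t$, at least one new node must be activated at step $t$; hence it terminates within at most $|V|$ steps, and the deadline $\tau=|V|$ is never binding. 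Thus $\sigma_\tau(S)=\sigma(S)$ for every $S\subseteq V$, and the two optimization problems have identical optimal solutions and values.

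Finally, I would conclude that a polynomial-time algorithm for time-critical influence maximization under LT-M would, via the above reduction, solve LT influence maximization in polynomial time, contradicting \NPhard{ness} of the latter. The proof involves no serious obstacle beyond carefully verifying that the ``first meeting'' mechanism in LT-M collapses to the instantaneous-propagation behaviour of LT when all meeting probabilities equal one; that is the only subtlety, and it is handled by the inductive step-by-step coupling described above.
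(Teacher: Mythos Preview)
Your proposal is correct and follows exactly the approach the paper takes: the paper simply observes (in the paragraph preceding the theorem) that LT influence maximization is the special case of time-critical LT-M with all $m(u,v)=1$ and $\tau=|V|$, and states the theorem without a separate proof. Your write-up supplies the routine verification details (the step-by-step coupling and the $|V|$-step termination bound) that the paper leaves implicit, but the reduction and the underlying idea are identical.
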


\subsection{Properties of the LT-M Model}

\begin{theorem}\label{thm:ltsubmod}
The influence function $\sigma_\tau(\cdot)$ is monotone and 
	submodular for an arbitrary instance of the LT-M model, 
	given any deadline constraint $\tau \ge 1$.
\end{theorem}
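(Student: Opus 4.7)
The plan is to mirror the possible-world argument used in the proof of Theorem~\ref{thm:submod} for IC-M, but now built on top of the live-edge equivalence that Kempe et al.\ established for the standard LT model. I would first set up an equivalent random process for LT-M: (i) for each node $v$, independently select at most one in-edge $(u,v)$ to be \emph{live}, where $(u,v)$ is chosen with probability $b(u,v)$ and no edge is chosen with probability $1-\sum_u b(u,v)$ (this is the standard LT live-edge selection driven by the random threshold $\theta_v$); (ii) for each edge $(u,v)\in E$ and each time step $t\in[1,\tau]$, independently flip a coin with bias $m(u,v)$ to predetermine whether a meeting would occur at step $t$ if $u$ were already active by then. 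A possible world $X$ is one joint outcome of (i) and (ii), and the orthogonality of the threshold coins and the meeting coins gives $\Pr[\mathcal{E}_X]=\Pr[\mathcal{E}_{X_L}]\cdot\Pr[\mathcal{E}_{X_M}]$, where $X_L$ denotes the live-edge outcome and $X_M$ the meeting-sequence outcome.

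The first key step is to argue that LT-M is distributionally equivalent to the above possible-world process when we declare $v$ to be active by the deadline $\tau$ iff there is a path $s=u_0\to u_1\to\cdots\to u_\ell=v$ of live edges from some seed $s\in S$ to $v$ such that, writing $\delta_i$ for the delay along $(u_{i-1},u_i)$ (the first index $t\ge 1$ at which a meeting event is pre-scheduled on that edge, counted from when $u_{i-1}$ becomes active), the total delay $\sum_{i=1}^{\ell}\delta_i$ is at most $\tau$. The equivalence on the threshold side is exactly Claim 2.6 of Kempe et al., applied step by step via deferred decisions; the meeting side is handled by the same deferred-decision idea, since whether $u$'s first meeting with $v$ after $u$'s activation falls within the remaining budget depends only on the prescheduled meeting coins on $(u,v)$ and is independent of everything else. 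I would then define $\sigma^X_\tau(S)$ to be the number of nodes reachable from $S$ in $X$ under this time-budgeted live-path reachability.

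Once the possible-world reformulation is in place, monotonicity and submodularity of $\sigma^X_\tau$ in each fixed $X$ follow by the same argument as in the IC-M proof: enlarging the seed set can only add reachable nodes, giving monotonicity; and for submodularity, if a node $u$ is reachable from $S_2\cup\{w\}$ but not $S_2$, then the realizing time-budgeted live-path must originate at $w$, which yields a witnessing path from $S_1\cup\{w\}$ to $u$ that does not exist from $S_1$, so $\sigma^X_\tau(S_1\cup\{w\})-\sigma^X_\tau(S_1)\ge \sigma^X_\tau(S_2\cup\{w\})-\sigma^X_\tau(S_2)$. Taking the expectation $\sigma_\tau(S)=\sum_X \Pr[\mathcal{E}_X]\cdot \sigma^X_\tau(S)$, monotonicity and submodularity are preserved under nonnegative linear combinations, completing the proof.

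The main obstacle I anticipate is rigorously justifying the equivalence in the first step, because unlike IC-M, the LT-M activation rule aggregates effective weights from several in-neighbors rather than giving each in-neighbor a private activation trial. The deferred-decision argument needs to handle the fact that an in-neighbor $u$'s contribution becomes effective not at $u$'s activation time but at the first meeting time after it, so the order in which in-neighbors of $v$ contribute to its accumulated weight is different from the standard LT model. I would handle this by observing that the single-live-in-edge selection depends only on $\theta_v$ and the $b(u,v)$ values, and is independent of the timing at which each active in-neighbor's weight becomes effective; consequently, conditional on a fixed meeting schedule, the ordering of effective weight arrivals does not change which in-neighbor is selected as live, and the standard LT equivalence argument lifts verbatim to the time-delayed setting.
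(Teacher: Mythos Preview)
Your proposal is correct and follows essentially the same route as the paper: the paper first isolates the LT-M/live-edge equivalence as a standalone lemma (Lemma~\ref{lemma:le}), proving it by fixing the meeting outcomes $M$ and verifying that the conditional activation probability $\frac{\sum_{u\in A_t(v)\setminus A_{t-1}(v)} b(u,v)}{1-\sum_{u\in A_{t-1}(v)} b(u,v)}$ agrees under both processes, and then invokes the identical possible-world reachability argument from Theorem~\ref{thm:submod}. Your anticipated ``main obstacle'' is precisely that lemma, and your sketch of its resolution (live-edge selection depends only on $\theta_v$ and the weights, independently of the meeting-induced arrival order) is the same deferred-decision idea the paper uses.
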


To prove Theorem~\ref{thm:ltsubmod}, let us first give
	an alternative form of the model definition for LT-M.
The main idea is to show that the diffusion process guided
	by LT-M is equivalent to one guided by a random
	``live-edge'' selection process.
Recall that in LT-M, each edge $(u,v)$ has an weight
	$b(u,v)$, and $\sum_{u}b(u,v)\le 1$ for all $v$.
The random edge selection protocol lets $v$ pick at
	most one incoming edge (and the corresponding in-neighbor) independently at random,
	selecting a particular $u$ with probability $b(u,v)$,
	or selecting no one with probability $1-\sum_u b(u,v)$.
If $v$ selects $u$, then we declare edge $(u,v)$ ``live'';
	otherwise declare it ``blocked''.
Kempe, Kleinberg and Tardos~\cite{kempe03}
	showed that the distribution over the final active
	node sets obtained by running the Linear Threshold
	process is equivalent to that by running the above
	random live-edge selection process.

In our case, it is also necessary to incorporate the random
	meeting events in LT-M into the ``live-edge'' (LE)
	process and obtain the ``LE-M'' model.
Specifically, now, at any step $t$, we activate any inactive
	$v$ provided that its selected neighbor $u$ was activated
	at some earlier step $t' < t$, and since $t'+1$,
	they meet for the first time here at $t$.
\begin{lemma}\label{lemma:le}
The Linear Threshold model incorporated with random meeting events is
	equivalent to the live-edge model incorporated with
	random meeting events.
In other worlds, the distribution over the
	final active sets under the LT-M model is the same as
	the one under the LE-M model.
\end{lemma}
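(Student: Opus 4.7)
The plan is to adapt the classical live-edge equivalence proof of Kempe, Kleinberg and Tardos~\cite{kempe03} to the time-delayed setting via the principle of deferred decisions, exploiting the fact that the random meeting events are orthogonal to the threshold and live-edge randomness. First I would condition on an arbitrary realization $M$ of all meeting coin-flips: for every edge $(u,v)\in E$ and every time step $t\in\{1,\dotsc,\tau\}$, fix in advance whether $u$ meets $v$ at $t$. Since these coins are independent of both $\{\theta_v\}_{v\in V}$ and the live-edge selections, it suffices to show that, conditioned on any such $M$, the LT-M and LE-M dynamics induce the same joint distribution over the time-indexed sequence of active sets $(A_0,A_1,\dotsc,A_\tau)$; unconditioning on $M$ then yields the lemma.

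Under a fixed $M$, I would prove the conditional equivalence by induction on $t$, maintaining the invariant that the distribution of $(A_0,\dotsc,A_t)$ agrees under LT-M and LE-M. The inductive step reduces to analyzing, for each inactive node $v$, the probability that $v$ becomes active at step $t$ given the history up through $t-1$. Enumerate the distinct effective in-neighbors of $v$ in the order they first become effective (an ordering completely determined by $M$ and the inductive activation history), say $u_1,u_2,\dotsc$, with cumulative weights $W_i=\sum_{j\le i}b(u_j,v)$. Under LT-M, the conditional probability that $v$ is triggered precisely upon $u_i$'s arrival, given it was not triggered earlier, equals $b(u_i,v)/(1-W_{i-1})$ by the uniformity of $\theta_v$ on $[0,1]$. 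Under LE-M, $v$'s (at most one) selected incoming edge is drawn once at the outset, and the conditional probability that this choice lands on $u_i$ given it is none of $u_1,\dotsc,u_{i-1}$ is the same ratio $b(u_i,v)/(1-W_{i-1})$. The complementary ``no activation'' event matches by the same token. Combined with the mutual independence of these choices across distinct nodes $v$, this advances the invariant from $t-1$ to $t$.

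The main obstacle I expect is ensuring that the inductive coupling respects timing rather than merely final activation: the deadline $\tau$ and the meeting schedule both interact with \emph{when} nodes become active, so an equivalence statement only for $A_\tau$ would not allow the induction to close on itself. I would handle this by taking the inductive hypothesis to be about the entire prefix $(A_0,\dotsc,A_t)$, so that the ordering in which $v$'s effective in-neighbors arrive is well-defined and identically distributed under both models, and the per-step KKT-style unveiling argument applies verbatim. A secondary subtlety is that in LT-M the threshold $\theta_v$ is a single random variable governing $v$'s behavior across all steps, while in LE-M the live-edge selection at $v$ is likewise a single one-shot choice; the deferred-decisions viewpoint, which reveals $\theta_v$ progressively as new effective in-neighbors arrive, is precisely what is needed to match this one-shot LE selection in distribution while preserving the temporal correlations that a naive step-by-step coupling would destroy.
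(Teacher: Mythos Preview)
Your proposal is correct and is essentially the same argument the paper gives: both condition on a fixed realization $M$ of all meeting coin-flips via deferred decisions, then run the Kempe--Kleinberg--Tardos incremental-revelation argument step by step, showing that the conditional probability an inactive $v$ activates at time $t+1$ given the history equals $\big(\sum_{u\in A_t(v)\setminus A_{t-1}(v)} b(u,v)\big)/\big(1-\sum_{u\in A_{t-1}(v)} b(u,v)\big)$ under both LT-M and LE-M, and finally unconditioning on $M$. Your write-up is somewhat more careful about the inductive invariant on the whole prefix $(A_0,\dotsc,A_t)$ and about temporal correlations, but the method is the same.
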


\begin{proof}
Let $A_t(v)$ be the set of nodes that are already active and
	have met $v$ at least once by the end of step $t$.
For the linear threshold process,
if some $v$ is not yet active by the end of step $t$,
	the chance that $v$ will become active
	at the next step, $t+1$, is the probability that the
	incremental weight contributed by
	$A_{t}(v)\setminus A_{t-1}(v)$
	manages to push the total effective weight as of $t+1$ over $\theta_v$.
Denote this probability by $P_1(v,t+1)$.

For the live-edge process, we analyze the probability of the
	same event, that is, an inactive $v$ will
	be activated at $t+1$, given that it has not been
	active yet by the end of $t$.
Let $\ev_1$ be the event that the corresponding node $u$ of
	the live-edge $v$ selected becomes active at $t$,
	and $u$ meets $v$ right away at $t-1$.
Let $\ev_2$ be the event that this selected $u$
	has become active before $t$, but $u$ and $v$
	have not met until $t+1$.
The probability that we want equals to $\Pr[\ev_1\cup\ev_2]$.
Notice that since $v$ selects at most one live-edge,
	$\ev_1$ and $\ev_2$ are mutually exclusive.
Let this probability be $P_2(v,t+1)$.

What remains to be shown is that $P_1(v,t+1) = P_2(v,t+1)$.
Notice that the meeting events are all independent
	and orthogonal to the live-edge selection
	or the linear threshold process.
Thus, by the principle of deferred decision, we can treat
	these meeting events as if their outcomes were determined before
	the diffusion starts, but were only going to get revealed
	as the process proceeds.
Without loss of generality, we fixed a set $M$ of outcomes,
	by independently flipping a coin with bias $m(u,v)$,
	$\forall (u,v)\in E$ in all time steps.

Conditioning on the fixed $M$, for any $v\in V$ and $t\in\{1,\dotsc,\tau-1\}$, we have
$$
P^M_1(v,t+1) = P^M_2(v,t+1) = \frac{\sum_{u\in (A_t(v)\setminus A_{t-1}(v))} b(u,v) }{1-\sum_{u\in A_{t-1}(v)} b(u,v) }\,.
$$
Then, by un-conditioning, i.e., taking the expectation over
	all possible outcomes of meeting events, we have $P_1(v,t+1) = P_2(v,t+1)$.
This completes the proof.

\end{proof}

With Lemma~\ref{lemma:le}, we can apply the same argument
	used in the submodularity proof for the IC-M model,
	to prove Theorem~\ref{thm:ltsubmod}.
The flow of the proof and the techniques used in it
	is similar to the proof of Theorem~\ref{thm:submod}.
We first run the live-edge selection random process and
	flip coins for all meeting events.
Conditioning on a fixed set of meeting events outcomes, and
	removing all edges but the live ones, we obtain a
	deterministic graph (possible world).
Then following the same arguments used in the proof of
	Theorem~\ref{thm:submod} leads us to the submodularity
	of the influence function $\sigma_\tau(\cdot)$ for the
	LT-M model.

\paragraph{Approximation Guarantees.}
Thus far, we have shown that the influence function 
	$\sigma_\tau(\cdot)$
	is monotone and submodular under both IC-M and LT-M models,
	then our time-critical
	influence maximization problem is a special case
	of monotone submodular function maximization
	subject to a cardinality (uniform matroid) constraint.
Therefore, we can apply the celebrated result
	in Nemhauser et al.~\cite{submodular} to
	obtain a greedy $(1-1/e)$-approximation algorithm.
The greedy algorithm repeatedly grows $S$ by adding $u$
	with the largest marginal influence w.r.t $S$ in each iteration
	until $|S|=k$.

\paragraph{Generalizing to the Triggering Set Model.}
So far we have obtained a constant factor approximation algorithm
	for the time-critical influence maximization under both IC-M
	and LT-M models (Theorem~\ref{thm:submod} and Theorem~\ref{thm:ltsubmod}).
In fact, this result can be applied to a more general time-delayed
	propagation model called the Triggering Set model
	with Meeting events (TS-M).

In the original Triggering Set (TS) model~\cite{kempe03},
	each node $v\in V$ independently chooses a random
	triggering set $\mathcal{T}(v)\subseteq \Nin(v)$, according
	to some distribution over all subsets of $\Nin(v)$.
During the influence diffusion process guided by the TS
	model, a node $v$ becomes active at step $t$
	if there is some $u\in \mathcal{T}(v)$ that is
	active at step $t-1$.
Viewing the process using the live-edge model, we declare the
	edge $(u,v)$ live if $u$ is selected into
	the triggering set of $v$; otherwise it is declared blocked.

The Triggering Set model generalizes IC and LT for the following reasons.
Recall that for IC, each edge $(u,v)$ in the graph is live with
	an independent probability $p(u,v)$.
Thus, for any $v$, we can view the triggering set selection
	process as $v$ adds each in-neighbor $u$ independently.
Meanwhile, for LT, recall that it is equivalent to the live-edge
	selection process, which amounts to saying that $v$ picks
	at most one in-neighbor $u$ into its triggering set with probability $b(u,v)$, and
	picks an empty set with probability $1-\sum_u b(u,v)$.

The meeting events are orthogonal and independent
	from the triggering set selection process, and thus
	the TS-M model still generalizes the IC-M and LT-M models.
Using the same arguments in the proofs for Theorem~\ref{thm:submod}
	and Theorem~\ref{thm:ltsubmod}, the influence function
	$\sigma_\tau(\cdot)$
	is monotone and submodular for every instance of the TS-M
	model as well, and thus this general model also enjoys
	the approximation guarantees provided by the greedy algorithm.

\paragraph{Inefficiency of the Greedy Approximation Algorithm.}
Although the greedy approximation algorithm, it is \SPhard to compute the exact influence in general
	graphs for the IC and LT models~\cite{ChenWW10, ChenYZ10}.
The hardness applies to IC-M and LT-M, since
	each of them subsumes corresponding original model.
A common practice is to estimate influence spread using Monte-Carlo (MC) simulations, in which case the approximation ratio of Greedy drops to $1-1/e-\epsilon$, where $\epsilon$ is small if the number of simulations is sufficiently large.
Due to expensive simulations, the greedy algorithm is not scalable to large data, even the implementation can be accelerated by the CELF optimization~\cite{Leskovec07}.

\begin{algorithm}[t!]
\caption{Greedy ($G=(V,E)$, $k$, $\sigma_\tau$)}\label{alg:greedy}
$S \gets \emptyset$\;
\For {$i = 1 \rightarrow k$} {
	$u \gets \argmax_{v\in V\setminus S}\big[\sigma_\tau(S\cup\{v\})-\sigma_\tau(S)\big]$\;
	$S \gets S \cup \{u\}$\;
}
Output $S$\;
\end{algorithm}


\eat{
\begin{proof}[Proof of Theorem~\ref{thm:submod}]
Fix a set $X_M$ of outcomes for all meeting events
	($\forall (u,v)\in E$, $\forall t\in [0,\tau]$) and
	run the live-edge selection process for all $v\in V$.
The coin-flips for meeting events and live-edge
	selections are mutually independent and orthogonal,
	and thus any $X_E$ on top of an $X_M$ leads to a possible world $X$,
	which is a deterministic graph obtained by conditioning on a particular set
	of outcomes for all random events.

Next, we introduce the notion of \emph{``reachability''} in
	a possible world $X$.
Consider a live edge $(u,v)$ in $X$.
Traditionally, without meeting events, $v$ is reachable
	from $u$ via just one hop.
Now with pre-determined meeting sequences, $v$ is
	reachable from $u$ via $t_v - t_u$ hops, where
	$t_u$ is the step in which $u$ itself is reached,
	and $t_v$ is the first step when $u$ meets $v$,
	after $t_u$.
Hence, we say that $v$ is \emph{reachable} from a
	seed set $S$ if and only if (1) there exists at least
	one path consisting entirely of live edges (called live-path)
	from some node in $S$ to $v$, and (2) the
	\emph{collective number of hops} along the \emph{shortest}
	live-path from $S$ to $v$ is no greater than $\tau$.

Then, let $\sigma^X_\tau(S)$ be the number of nodes
	reachable from $S$ in $X$ (by the reachability definition above).
Let $S_1$ and $S_2$ be two arbitrary sets such that $S_1\subseteq S_2\subseteq V$, and let node $w \in V \setminus S_2$ be arbitrary.
First, $\sigma^X_\tau(\cdot)$ is monotone because if some node $u$ can be reached by $S_1$, the source of the live-path to $u$ must be also in $S_2$.

Second, for submodularity, consider a certain node $u$ which is reachable from $S_2 \cup \{w\}$ but not from $S_2$.
This implies (1) $u$ is not reachable from $S_1$ either, and (2) the source of the live-path to $u$ must be $w$.
Hence, $u$ is reachable from $S_1 \cup \{w\}$ but not from $S_1$.
This gives $\sigma^X_\tau(S_1\cup\{w\})-\sigma^X_\tau(S_1) \geq \sigma^X_\tau(S_2\cup\{w\})-\sigma^X_\tau(S_2)$.

Let $\mathcal{E}_I$ be the event that $I$ is the true realization (virtually) of the corresponding random process.
Taking the expectation over all possible worlds, we have $\sigma_\tau(S) = \sum_{X} \Pr[\mathcal{E}_X] \cdot \sigma^X_\tau(S), \forall S\subseteq V$, where $X$ is any combination of $X_T$ and $X_M$, and $\Pr[\mathcal{E}_X] = \Pr[\mathcal{E}_{X_T}]\cdot\Pr[\mathcal{E}_{X_M}]$.
Therefore, $\sigma(\cdot)$ is a nonnegative linear combination of monotone and submodular functions, which is also monotone and submodular.
This was to be shown.
\end{proof}
}

\section{Computing Influence in Arborescences in the IC-M model}\label{sec:tree}
In this section, we derive an dynamic programming algorithm that computes exact influence spread in tree structures, which will be used in Sec.~\ref{sec:algo} to develop MIA-M.
The algorithmic problem of efficient computation of exact influence spread
	in trees in the IC-M model with deadline constraint is also of independent
 	interest.

An \emph{in-arborescence} is a directed tree where all edges point into the root.
Given a graph $G=(V,E)$ with influence probability function $p$ and meeting probability function $m$, consider an in-arborescence $A=(V_A,E_A)$ rooted at $v$ where $V_A\subseteq V$ and $E_A \subseteq E$.
We assume that influence propagates to $v$ only from nodes in $A$.
We also assume that there exists at least one $s\in S$ such that $s \in V_A$; otherwise no nodes can be activated in $A$.
Given a seed set $S$ and deadline $\tau$, we show how to compute $\sigma_\tau(S)$ in $A$ in time linear to the size of the graph.

Let $ap(u,t)$ be the {\em activation probability} of $u$ at step $t$, i.e., the probability that $u$ is activated at step $t$ after the cascade ends in $A$.
Since the events that $u$ gets activated at different steps are mutually exclusive, the probability that $u$ ever becomes active by the end of step $\tau$ is $\sum_{t=0}^\tau ap(u,t)$.
By linearity of expectation, $\sigma_\tau(S) = \sum_{u \in V}\sum_{t=0}^\tau ap(u,t)$.
Hence, the focus is to compute $ap(u,t)$, for which we have the following theorem.
\begin{theorem}\label{thm:ap}
Given any $u$ in arborescence $A$, and any $t \in [0,\tau]$,
	the {\em activation probability} $ap(u,t)$ can be recursively
	computed as follows.

For base cases when $u\in S$ or $t=0$,
\begin{equation}
ap(u,t) =
	\left\{
	\begin{array}{lr}
 	1 &\mbox{ $(u \in S\wedge t = 0)$}\\
 	0 &\mbox{ $(u \not\in S\wedge t = 0)$}\\	
 	0 &\mbox{ $(u \in S\wedge t \in \{1,\dotsc,\tau\})$}
	\end{array}
	\right.
\end{equation}

For $u \not\in S\wedge t \in \{1,\dotsc,\tau\}$,
\begin{align}\label{eqn:ap}
ap(u,t)
\nonumber &= \prod_{w \in \Nin(u)} \Big( 1-\sum_{t'=0}^{t-2} ap(w,t')\,p(w,u)\,[1-(1-m(w,u))^{t-t'-1}]\Big)\\
		&\qquad - \prod_{w \in \Nin(u)} \Big( 1-\sum_{t'=0}^{t-1} ap(w,t')\,p(w,u)\,[1-(1-m(w,u))^{t-t'}]\Big),
\end{align}
where $\Nin(u) \subseteq V_A$ is the set of in-neighbors of $u$ in $A$.
\end{theorem}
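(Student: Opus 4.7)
The plan is to proceed by strong induction on $t$, treating each $u$ individually. The base cases are immediate from the interpretation of $ap(u,t)$ as the probability that $u$ first becomes active at exactly step $t$: a seed is active at step $0$ and can never be newly activated later, and a non-seed is not active at step $0$. For the recursive case ($u \notin S$, $t \geq 1$), the strategy is to use the identity $ap(u,t) = q(u,t-1) - q(u,t)$, where $q(u,t') := \Pr[u \text{ not active by end of step } t']$. This holds because the events ``$u$ is first activated at step $j$'' are mutually exclusive across $j$. Matching this against~\eqref{eqn:ap}, the two products on the right-hand side should coincide with $q(u,t-1)$ and $q(u,t)$ (the exponents $t - t' - 1$ in the first product are exactly what one obtains by substituting $t-1$ for $t$ in the second), so the entire proof reduces to deriving a product formula for a single $q(u,t')$.

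The derivation of $q(u,t')$ in turn has two ingredients. First, the in-arborescence hypothesis supplies independence: the subtrees rooted at distinct in-neighbors of $u$ are vertex-disjoint, their only connections to $u$ are the edges $(w,u)$, and by the possible-world semantics from Sec.~\ref{sec:model} all live-edge coins and all meeting coins are mutually independent. Hence the events ``$w$ has successfully activated $u$ by step $t'$'' are independent across $w \in \Nin(u)$, so $q(u,t') = \prod_{w \in \Nin(u)} (1 - r_w(t'))$, where $r_w(t')$ is the per-in-neighbor success probability. Second, $r_w(t')$ is computed by conditioning on $w$'s first-activation step $t'' \in \{0,\dotsc,t'-1\}$, which contributes $ap(w,t'')$ by the inductive hypothesis. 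Given activation at step $t''$, node $w$ makes $t' - t''$ independent meeting attempts over steps $t''{+}1,\dotsc,t'$, so a first meeting with $u$ occurs with probability $1 - (1-m(w,u))^{t'-t''}$, and the induced single activation attempt succeeds with probability $p(w,u)$. Summing these disjoint contributions over $t''$ yields exactly the inner sum appearing in~\eqref{eqn:ap}.

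The step I expect to require the most care is the cross-in-neighbor independence: one must argue that the randomness driving $ap(w,t'')$ lives entirely inside the subtree rooted at $w$ (together with the live-edge and meeting coins internal to that subtree), and is therefore disjoint from the coins on edge $(w,u)$ and from the randomness in every sibling subtree. This is precisely where the in-arborescence hypothesis is essential; in a general DAG the same ancestor could sit in two sibling subtrees and the factorization would fail. A secondary pitfall is an off-by-one in the meeting exponent: because $w$ first attempts to meet $u$ only at step $t''{+}1$ (not at its own activation step $t''$), the number of attempts by step $t'$ is $t' - t''$, and using $t' - t'' - 1$ would give a wrong formula. Once these two points are in place, the algebraic subtraction $q(u,t-1) - q(u,t)$ directly produces~\eqref{eqn:ap}.
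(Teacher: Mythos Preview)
Your proposal is correct and follows essentially the same approach as the paper: interpret each of the two products in~\eqref{eqn:ap} as the probability that $u$ has not been activated by step $t-1$ (resp.\ $t$), and take the difference. You are in fact more explicit than the paper about why the product over $\Nin(u)$ is justified (vertex-disjoint subtrees, independent coins), a point the paper leaves implicit.
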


\begin{proof}
	The base cases ($u\in S$ or $t=0$) are trivial.
	When $u \not\in S$ and $t \in \{1,\dotsc,\tau\}$, for any in-neighbor $w \in V_A$ of $u$ and $t' < t$, $p(w,u)\big(1-(1-m(w,u))^{t-t'-1}\big)$ is the probability that $w$ meets $u$ at least once from $t'+1$ to $t-1$ and that $(w,u)$ is live.
	Since the events that $w$ gets activated at different $t'$ are mutually exclusive, $1-\sum_{t'=0}^{t-2}ap(w,t')p(w,u)\big(1-(1-m(w,u))^{t-t'-1}\big)$ is the probability that $u$ has not been activated by $w$ before or at $t-1$.
	Note that $\sum_{t'=0}^{-1} ap(w,t')p(w,u)\big(1-(1-m(w,u))^{t-t'-1}\big)=0$, so the above still holds for $t=1$.	
	Similarly, $\prod_{w \in N^{\text{in}}(u)}\big( 1-\sum_{t'=0}^{t-1} ap(w,t')p(w,u)\big(1-(1-m(w,u))^{t-t'}\big)\big)$ is the probability that $u$ has not become active before or at $t$.
	Hence, Formula~\ref{eqn:ap} is exactly the probability that $u$ is activated at $t$, which is $ap(u,t)$.
\end{proof} 

The recursion given by Formula~\ref{eqn:ap} can be carried out by dynamic programming, traversing from leaves to the root.
In a general in-arborescence, given as input a node $u$ and a deadline constraint $\tau$, the time complexity of calculating $\sum_{t=0}^\tau ap(u,t)$ by Formula~\ref{eqn:ap} is polynomial to $\tau$, which is exponential to the size of the input: $\Theta(\log\tau)$ bits.
In principle, this does not affect efficiency much as $\tau$ is small ($5$ or $10$), and in general much smaller than the size of the graph.

To reduce the amount of computations, a few optimizations can be applied in implementation.
Let $path(u)$ be the path from some $s \in S$ in $A$ to $u$ that has the minimum length among all such paths.
Note that we only need to compute $ap(u,t)$ for $t \in \{|path(u)|,\dotsc,\tau\}$, as $u$ cannot be reached earlier than step $|path(u)|$.
That is, $ap(u,t)=0$ when $t < |path(u)|$.
Also, if $path(u) = \emptyset$ (i.e., does not exist), $ap(u,t)=0, \forall t$.

For computing $ap(u,t)$ on a chain of nodes within an in-arborescence, we derive a more efficient method that reduces the computation to polynomial to $\log \tau$, as shown in the next section.

\subsection{Fast Influence Computation on Chain Graphs}\label{sec:chain}

Let $r\in \{1,\dotsc,|V|\}$.
We consider a length-$r$ (directed) chain graph $L = (V_L, E_L)$, where
	$V_L=\{u_0,u_1, \ldots, u_r\}$
	and $E_L=\{(u_i,u_{i+1}) : i \in \{0,\dotsc r-1\}\}$:
$$
u_0 \longrightarrow u_1 \longrightarrow \ldots \longrightarrow u_{r-1} \longrightarrow u_r \,.
$$
Here, $L$ can be thought of as a ``sub-arborescense'' of an in-arborescence $A$.
The problem is to compute $ap(u, t)$, given a step $t \in \{0,\dotsc,\tau\}$,
	for any node $u$ on this chain.
Notice that for any particular node $u$,
	what really matters in our analysis is the distance, i.e., the length of the path
	from the closest effective seed to $u$.
Suppose that there is another seed $u_s \neq u_0$
	on the chain, where $s < r$.
Nodes $u_1, \dotsc, u_{s-1}$ will only be possibly influenced by
	$u_0$ (which is the effective seed for them), due to the
	structure of the chain,
	while nodes $u_{s+1},\dotsc,u_r$ will only be possibly
	influenced by $u_s$ (the effective seed), since the influence
	from $u_0$ and any node before $u_s$
	will be blocked by $u_s$.
Therefore, without loss of generality, we assume that the seed set
	$S$ is a singleton set $\{u_0\}$ and the sole seed $u_0$ is activated
	at step $0$, i.e., $ap(u_0,0) = 1$ and $ap(u,t) = 0$ for any $t \ge 1$. 

Denote by $m_i \overset{\text{def}}{=} m(u_{i-1}, u_i)$ the meeting probability
	and by $p_i \overset{\text{def}}{=} p(u_{i-1}, u_i)$ the influence probability
	between $u_{i-1}$ and $u_i$, respectively, for all $ i \in \{1,\dotsc, r\}$.
The activation of node $u$ at step $t$
	can be characterized by a probability distribution over all possible $t$,
	taking all relevant meeting probabilities and influence probabilities
	into consideration.

For all $ i \in \{1,\dotsc, r\}$, let $X_i$ be the random variable indicating
	the number of steps needed for node $u_{i-1}$ to meet $u_i$ for the
	first time, given meeting probability $m_i$.
Note that $X_i$ is a geometric random variable with parameter $m_i$, and taking
	values from $\{1,2,3,\ldots\}$.
Suppose that the path from $u_0$ to $u$ is of length $\ell$.
Let $X = \sum_{i=1}^{\ell} X_i$.
The following lemma links the activation probability $ap(u,t)$ with
	the probability distribution of $X$.

\begin{lemma} \label{lem:aput}
The activation probability of node $u$ at time $t$ is given as
\[
ap(u,t) =\left( \prod_{i=1}^\ell p_i \right) \cdot \Pr[X=t].
\]

\end{lemma}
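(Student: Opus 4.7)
The plan is to use the possible-world semantics already invoked in the proof of Theorem~\ref{thm:submod}, specialized to this chain. Fix, once and for all, two independent collections of random variables: the live/blocked indicators $L_i$ for each edge $(u_{i-1},u_i)$, $i=1,\dotsc,r$, with $\Pr[L_i=1]=p_i$; and the meeting processes, one Bernoulli($m_i$) coin per step per edge. Because a single activation attempt is permitted upon the \emph{first} meeting, the only feature of the meeting process that matters for edge $i$ is the waiting time $X_i$ from the moment $u_{i-1}$ is activated until its first meeting with $u_i$; this $X_i$ is geometric with parameter $m_i$ on $\{1,2,\dotsc\}$, and by the independence of meeting coins across edges and steps, the $X_i$'s are jointly independent and also independent of $\{L_i\}$.

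Next I would establish the following deterministic equivalence on the chain: node $u_\ell$ is activated at step $t$ \emph{if and only if} $L_i=1$ for all $i=1,\dotsc,\ell$ and $X_1+X_2+\cdots+X_\ell=t$. The ``if'' direction is a straightforward induction on $\ell$. Assuming inductively that $u_{i-1}$ is activated at step $T_{i-1}:=X_1+\cdots+X_{i-1}$, then $u_{i-1}$ first meets $u_i$ at step $T_{i-1}+X_i$, and since $L_i=1$, the single activation attempt succeeds, activating $u_i$ precisely at step $T_i=T_{i-1}+X_i$. The ``only if'' direction uses the chain topology: the only path from $u_0$ (the sole seed) to $u_\ell$ runs through $u_1,\dotsc,u_{\ell-1}$, so any $L_i=0$ kills propagation permanently, and if all $L_i=1$ the activation time is forced to equal $\sum_{j=1}^{\ell}X_j$.

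Combining the two steps, and using the independence of $\{L_i\}$ from $X=\sum_{i=1}^{\ell}X_i$ and the mutual independence of $L_1,\dotsc,L_\ell$,
\[
ap(u,t)=\Pr\!\left[\,\bigcap_{i=1}^{\ell}\{L_i=1\}\;\cap\;\{X=t\}\right]
=\left(\prod_{i=1}^{\ell} p_i\right)\Pr[X=t],
\]
which is the claimed identity. The base case $\ell=0$ (when $u=u_0$) is consistent since $X$ is then an empty sum, $\Pr[X=0]=1$, and the empty product is $1$.

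The work is essentially bookkeeping rather than a real obstacle, but the one subtle point that deserves care is the decoupling of the ``single chance with probability $p_i$'' attempt from the geometric meeting time $X_i$. One must notice that whether the attempt succeeds does \emph{not} affect when the first meeting occurs, and conversely the first-meeting step is determined purely by the meeting coin flips, which are independent of the live/blocked identity of the edge; this is exactly the principle-of-deferred-decisions decomposition used in Theorem~\ref{thm:submod}, so it justifies treating $\{L_i\}$ and $\{X_i\}$ as an independent product measure, from which the product form in the lemma follows immediately.
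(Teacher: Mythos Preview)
Your proof is correct and follows essentially the same route as the paper's: both decompose the event ``$u_\ell$ is activated at step $t$'' into ``every edge $i\le \ell$ succeeds at its first meeting'' and ``the cumulative first-meeting time equals $t$,'' then invoke independence of activation outcomes and meeting times to obtain the product $\bigl(\prod_i p_i\bigr)\Pr[X=t]$. Your version is somewhat more explicit in naming the live/blocked indicators $L_i$ and spelling out the deterministic equivalence and independence via the deferred-decisions framework, whereas the paper states the same facts more informally.
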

\begin{proof}
The event that $u$ is activated at time $t$ is equivalent to that
	for all $i\in\{1,\dotsc, \ell\}$, 
	node $u_{i-1}$ meets $u_i$ and activates $u_i$ upon their first meeting, and
	the first meeting of $u_{\ell-1}$ and $u_\ell$ after $u_{\ell-1}$
	is activated occurs at time $t$.
Since $X= \sum_{i=1}^{\ell} X_i$, $\Pr[X=t]$ is the probability that
	$u_0$ meets $u_1$, and then $u_1$ meets $u_2$, and so on, and
	$u_{\ell-1}$ meets $u_\ell$, and the total number of steps taken until
	$u_{\ell-1}$ meets $u_\ell$ is $t$.
$\prod_{i=1}^\ell p_i$ is the probability that all activation attempts
	are successful upon first meetings, conditioned on the event that meetings 
	of $u_{i-1}$ and $u_i$ for all $i\in\{1,\dotsc, \ell\}$ occur.
Therefore, it is clear that $\left( \prod_{i=1}^\ell p_i \right) \cdot \Pr[X=t]$ is the
	probability that $u=u_\ell$ is activated at time $t$.
\end{proof}

With Lemma~\ref{lem:aput}, the key to compute activation probability
	$ap(u,t)$ is to compute the probability distribution of $X$, the
	sum of $\ell$ geometric random variables.
When all geometric random variables have the {\em same} parameter, $X$
	is the well-known negative binomial random variable.
We restate the result on the distribution of negative binomial random variable
	below.
For completeness, a proof is given in the appendix.

\def\lemnegbinomial{
Let $X=\sum_{i=1}^{\ell} X_i$, where each $X_i$ is a geometric random variable
	with parameter $m$ and range $\{1,2,3,\ldots\}$.
Then we have $\Pr[X=t]=0$ for $t < \ell$, and for all $t\ge \ell$,
$$
\Pr[X=t] = m^\ell \cdot (1-m)^{t-\ell} \cdot \binom {t-1}{\ell-1}.
$$
}
\begin{lemma}[Negative Binomial Distribution] \label{lem:negbinomial}
{\lemnegbinomial}
\end{lemma}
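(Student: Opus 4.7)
The plan is to give a direct combinatorial proof by interpreting the sum as the waiting time for the $\ell$-th success in a sequence of independent Bernoulli trials. Specifically, consider a single infinite stream of i.i.d.\ Bernoulli$(m)$ trials. Let $T_1$ be the index of the first success, $T_2$ the index of the second, and so on. By the memoryless property of Bernoulli trials, the gaps $T_1, T_2 - T_1, T_3 - T_2, \ldots$ are i.i.d.\ geometric with parameter $m$ and range $\{1,2,3,\ldots\}$. Hence $X = \sum_{i=1}^\ell X_i$ has the same distribution as $T_\ell$, the index of the $\ell$-th success.

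With this reformulation in hand, the two parts of the lemma follow easily. For the first part, since each $X_i \ge 1$, we have $X \ge \ell$ deterministically, so $\Pr[X=t] = 0$ for $t < \ell$. For the second part, with $t\ge \ell$, the event $\{X=t\}$ is exactly the event that the $t$-th trial is a success and exactly $\ell-1$ of the first $t-1$ trials are successes. Each such outcome specifies $\ell$ successes and $t-\ell$ failures at prescribed positions, hence has probability $m^\ell(1-m)^{t-\ell}$; the number of ways to choose the positions of the $\ell-1$ successes among the first $t-1$ trials is $\binom{t-1}{\ell-1}$. Multiplying gives the stated formula.

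The only subtle step is justifying the identification of $X$ with $T_\ell$, but this is standard and needs no computation beyond noting that the inter-arrival gaps of a Bernoulli process are i.i.d.\ geometric. If one prefers a purely algebraic route, a backup plan is induction on $\ell$: the base case $\ell=1$ is the definition of the geometric distribution, and the inductive step uses the convolution
\[
\Pr[X = t] = \sum_{k=1}^{t-\ell+1} \Pr[X_1 = k]\cdot \Pr[X_2 + \cdots + X_\ell = t-k],
\]
together with the hockey-stick identity $\sum_{k=1}^{t-\ell+1}\binom{t-k-1}{\ell-2} = \binom{t-1}{\ell-1}$ to fold the factors into the desired closed form. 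I expect the combinatorial argument to be cleaner, and the main (minor) obstacle in either version is simply bookkeeping of the index ranges so that the $t < \ell$ degenerate case is handled correctly.
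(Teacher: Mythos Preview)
Your proposal is correct. Your primary route, however, is genuinely different from the paper's own proof: the paper proceeds by induction on $\ell$, writing
\[
\Pr[X=t] = \sum_{t'=z}^{t-1} \Pr[X'=t']\cdot (1-m)^{t-t'-1} m
\]
for $X'=\sum_{i=1}^{z} X_i$, substituting the inductive hypothesis, and then collapsing the sum via the hockey-stick identity $\sum_{t'=z}^{t-1}\binom{t'-1}{z-1}=\binom{t-1}{z}$. In other words, the paper's argument is exactly your ``backup plan.'' Your combinatorial argument, identifying $X$ with the index $T_\ell$ of the $\ell$-th success in a single Bernoulli$(m)$ stream, is shorter and avoids any computation beyond counting: it yields the formula in one line once the coupling is in place. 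The inductive approach buys nothing extra here, though it does have the minor advantage of being self-contained (no appeal to the memoryless property or to a coupling with a Bernoulli process), and it mirrors the structure of the paper's subsequent proof of the all-distinct-parameters case, where no analogous combinatorial shortcut is available.
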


For the summation of geometric random variables with different parameters, 
	the result is more complex.
We cannot find the result from literature for the general case, 
	so we provide our
	own analysis for the case that all parameters of geometric random 
	variables are {\em distinct}, as shown below.

\def\lemalldistinct{
Let $X=\sum_{i=1}^{\ell} X_i$, where each $X_i$ is a geometric random variable
	with parameter $m_i$ and range $\{1,2,3,\ldots\}$.
Suppose that all $m_i$'s are distinct.
Then we have $\Pr[X=t]=0$ for $t < \ell$, and for all $t\ge \ell$,
\[
\Pr[X=t] = \Big(\prod_{i=1}^\ell m_i\Big)  \cdot  
	\sum_{i=1}^\ell \frac{(1-m_i)^{t-1}}{\prod_{j = 1, j \neq i}^{\ell} (m_j-m_i)} .
\]
}
\begin{lemma} \label{lem:alldistinct}
{\lemalldistinct}
\end{lemma}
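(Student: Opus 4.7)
The plan is to use the probability generating function (PGF) method, since the random variable $X$ is a sum of $\ell$ independent nonnegative integer-valued random variables, and the PGF of a geometric distribution has a particularly simple rational form that interacts well with partial fraction decomposition. The case $t<\ell$ is immediate: each $X_i\ge 1$, so $X=\sum_i X_i\ge \ell$ almost surely, giving $\Pr[X=t]=0$. The remaining work is to handle $t\ge \ell$.

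First, I would compute the PGF of each $X_i$. Using the geometric distribution on $\{1,2,\ldots\}$ with parameter $m_i$,
\[
G_{X_i}(z) \;=\; \sum_{k=1}^\infty m_i(1-m_i)^{k-1}z^k \;=\; \frac{m_i z}{1-(1-m_i)z}.
\]
By independence,
\[
G_X(z) \;=\; \prod_{i=1}^\ell G_{X_i}(z) \;=\; \Big(\prod_{i=1}^\ell m_i\Big)\, z^\ell \prod_{i=1}^\ell \frac{1}{1-(1-m_i)z}.
\]
Here is where the distinctness hypothesis becomes crucial: the $\ell$ poles $z_i = 1/(1-m_i)$ of the rational factor are all simple, so a standard partial fraction decomposition applies.

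Next, I would compute the residue coefficients $A_i$ in the decomposition
\[
\prod_{i=1}^\ell \frac{1}{1-(1-m_i)z} \;=\; \sum_{i=1}^\ell \frac{A_i}{1-(1-m_i)z},
\]
by multiplying both sides by $1-(1-m_i)z$ and evaluating at $z=1/(1-m_i)$. A short calculation using $1-\tfrac{1-m_j}{1-m_i}=\tfrac{m_j-m_i}{1-m_i}$ yields
\[
A_i \;=\; \prod_{j\ne i}\frac{1-m_i}{m_j-m_i} \;=\; \frac{(1-m_i)^{\ell-1}}{\prod_{j\ne i}(m_j-m_i)}.
\]
Then, expanding each simple-pole term as a geometric series $\frac{1}{1-(1-m_i)z}=\sum_{k\ge 0}(1-m_i)^k z^k$ and collecting the coefficient of $z^t$ in $G_X(z)$ gives, for $t\ge\ell$,
\[
\Pr[X=t] \;=\; \Big(\prod_{i=1}^\ell m_i\Big)\sum_{i=1}^\ell A_i\,(1-m_i)^{t-\ell} \;=\; \Big(\prod_{i=1}^\ell m_i\Big)\sum_{i=1}^\ell \frac{(1-m_i)^{t-1}}{\prod_{j\ne i}(m_j-m_i)},
\]
where I used $(1-m_i)^{\ell-1}(1-m_i)^{t-\ell}=(1-m_i)^{t-1}$. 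This matches the claimed formula.

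The only mildly delicate step is the partial-fraction computation, and even there distinctness of the $m_i$ is exactly what ensures simple poles and a clean closed form (without distinctness, confluent poles would force derivative-type terms, as in Lemma~\ref{lem:negbinomial}). A minor point worth mentioning is the edge case $m_i=1$, which makes $z_i$ a pole ``at infinity''; this case corresponds to $X_i\equiv 1$ deterministically and can either be excluded from the lemma's distinctness hypothesis or handled by a continuity/limit argument. Overall no obstacle is expected; the proof is a direct computation once the PGF viewpoint is adopted.
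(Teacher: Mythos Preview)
Your proof is correct and takes a genuinely different route from the paper's.

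The paper proves the lemma by induction on $\ell$: assuming the formula for $X'=\sum_{i=1}^z X_i$, it writes $\Pr[X=t]=\sum_{t'=z}^{t-1}\Pr[X'=t']\,(1-m_{z+1})^{t-t'-1}m_{z+1}$, swaps the order of summation, sums the resulting geometric series in $t'$, and then has to prove a separate algebraic identity (the paper's Lemma~\ref{lemma:z+1}, itself proved by a somewhat intricate induction) to collapse one of the two remaining sums into the $(z{+}1)$-st term. Your PGF approach bypasses all of this: independence turns the product of $\ell$ simple rational PGFs into a rational function with simple poles (exactly because the $m_i$ are distinct), partial fractions gives the coefficients $A_i$ in one residue computation, and reading off the $z^t$ coefficient finishes the argument. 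In effect, the partial-fraction step absorbs what the paper achieves via Lemma~\ref{lemma:z+1}. Your method is shorter and more conceptual; the paper's method is more elementary in that it never invokes generating functions, but it pays for that with a nontrivial auxiliary identity. Your remark about $m_i=1$ is a fair caveat, though in that case the corresponding factor $1/(1-(1-m_i)z)$ is identically $1$ and the term $(1-m_i)^{t-1}$ in the final formula vanishes for $t\ge 2$, so the formula remains valid and the degeneracy is harmless.
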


We defer the proof of Lemma~\ref{lem:alldistinct} to the appendix.
With the above lemmas, we can derive the following theorem for the computation
	of activation probabilities for the chain graphs.

\begin{theorem} \label{thm:chain}
Let $L = (V_L, E_L)$ be a directed length-$r$ chain and
	let $\{u_0\}$ be the seed set.
Let $u$ be a node in the graph with distance $\ell$ from node $u_0$.
For the activation probability $ap(u,t)$ of node $u$ at time $t$, we have
	$ap(u,t)=0$ if $t< \ell$, and
(a) if all meeting probabilities $m_i$'s of the edges are distinct, then
\begin{equation}\label{eqn:apbasic}
  ap(u, t) = \Big(\prod_{i=1}^\ell m_i\,p_i \Big)  \cdot  \sum_{i=1}^\ell \frac{(1-m_i)^{t-1}}{\prod_{j = 1, j \neq i}^{\ell} (m_j-m_i)} \,;
\end{equation}
(b) if all meeting probabilities $m_i$'s equal to a value $m$, then
\begin{equation}\label{eqn:apequal}
ap(u,t) = m^\ell \cdot (1-m)^{t-\ell}\cdot \left(\prod_{i=1}^\ell p_i \right) \cdot \binom {t-1}{\ell-1}.
\end{equation}
\end{theorem}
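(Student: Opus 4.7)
The plan is to derive Theorem~\ref{thm:chain} as an almost immediate corollary of the preceding lemmas, without any new probabilistic work. Specifically, Lemma~\ref{lem:aput} already tells us that
\[
ap(u,t) = \Big(\prod_{i=1}^{\ell} p_i\Big) \cdot \Pr[X = t],
\]
where $X = \sum_{i=1}^{\ell} X_i$ and each $X_i$ is an independent geometric random variable with parameter $m_i$ and support $\{1,2,3,\ldots\}$. Thus all that remains is to substitute the correct closed form for $\Pr[X = t]$ under each of the two regimes stated in the theorem.

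First I would handle the trivial range $t < \ell$. Since each $X_i$ takes values at least $1$, the sum $X$ is at least $\ell$ deterministically, so $\Pr[X = t] = 0$ whenever $t < \ell$, and hence $ap(u,t) = 0$ in that range. This covers the preamble of both parts.

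Next, for part~(a), under the hypothesis that the $m_i$'s are pairwise distinct, I would invoke Lemma~\ref{lem:alldistinct} to write
\[
\Pr[X = t] = \Big(\prod_{i=1}^{\ell} m_i\Big) \sum_{i=1}^{\ell} \frac{(1-m_i)^{t-1}}{\prod_{j=1,\,j\neq i}^{\ell}(m_j - m_i)}
\]
for $t \ge \ell$, and then fold in the factor $\prod_{i=1}^\ell p_i$ from Lemma~\ref{lem:aput}. Combining the two products $\prod_i m_i$ and $\prod_i p_i$ into $\prod_i m_i p_i$ yields Formula~\eqref{eqn:apbasic}. For part~(b), under the hypothesis that all $m_i$ equal a common value $m$, I would instead invoke Lemma~\ref{lem:negbinomial} to obtain $\Pr[X=t] = m^\ell (1-m)^{t-\ell}\binom{t-1}{\ell-1}$ for $t \ge \ell$, and again multiply by $\prod_{i=1}^\ell p_i$, giving Formula~\eqref{eqn:apequal}.

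There is essentially no obstacle here, since the analytical work is entirely contained in Lemmas~\ref{lem:aput}, \ref{lem:negbinomial}, and~\ref{lem:alldistinct}; the theorem packages their combination into a usable form for the chain graph. The only subtlety worth checking explicitly is the consistency of the boundary case $t = \ell$ in part~(a): the closed-form sum must evaluate to $\prod_{i=1}^{\ell} m_i p_i$ (since $\Pr[X=\ell] = \prod_i m_i$), which serves as a useful sanity check that the partial-fraction expression from Lemma~\ref{lem:alldistinct} has been transcribed correctly.
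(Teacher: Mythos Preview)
Your proposal is correct and matches the paper's approach exactly: the paper does not give a standalone proof of Theorem~\ref{thm:chain} but simply states that it follows from the preceding lemmas, which is precisely the combination of Lemma~\ref{lem:aput} with Lemma~\ref{lem:alldistinct} for part~(a) and Lemma~\ref{lem:negbinomial} for part~(b) that you outline.
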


Computing activation probabilities by Theorem~\ref{thm:chain}
	has time complexity polynomial to $\log\tau$ for $t\le \tau$.
Moreover, if we want to compute the cumulative probability
	$\sum_{t=0}^{\tau} ap(u,t)$, for Equation~\eqref{eqn:apbasic} it is
	easy to see that the key computation is 
	$\sum_{t=\ell}^{\tau} (1-m_i)^{t-1}=\frac{(1-m_i)^{\ell-1}-(1-m_i)^\tau}{m_i}$,
	which can be done in time polynomial to $\log\tau$; for 
	Equation~\eqref{eqn:apequal}, it is well known that the cumulative
	distribution of a negative binomial random variable can be computed via
	the regularized incomplete beta function, which can also be done in time
	polynomial to $\log\tau$.

Although chain graphs seem to be a rather restricted version of
	arborescences, in actual computations, 
	since the size of the seed set 
	is typically much smaller than the size of the original graph, for any
	node $v$, the number of seeds in this particular $v$'s in-arborescence
	is usually quite small, and hence the chain cases will be common.
In view of the above, this can help reduce the running time of our dynamic
	programming algorithm.
An interesting 
	open problem is to design an algorithm that computes influence spread
	in general in-arborescences with running time polynomial in $\log\tau$.


\eat{
When meeting probabilities are not all distinct.
Consider a 3-edge chain $$u = u_1 \ra u_2 \ra u_3 \ra u_4 = v$$
Suppose 2 out of the 3 meeting probabilities are equal.

\smallskip\noindent\textbf{Case 1:} 
$m(u_1,u_2) = \mathbf{m_1}$,
$m(u_2,u_3) = \mathbf{m_1}$,
$m(u_3,u_4) = \mathbf{m_2}$.
\begin{eqnarray*}
	ap(v,t) &=& \frac{m_1^2 m_2 p_1 p_2 p_3}{(m_1-m_2)^2} \Big[ (1-m_2)^{t-1} - (1-m_2)(1-m_1)^{t-2} - (t-2)(m_1-m_2)(1-m_1)^{t-2} \Big]  \\
	&=& \frac{m_1^2 m_2 p_1 p_2 p_3}{(m_1-m_2)^2} \Big[ (1-m_2)^{t-1} - (1-m_1)^{t-1} - (t-1)(m_1-m_2)(1-m_1)^{t-2} \Big]
\end{eqnarray*}

\smallskip\noindent\textbf{Case 2:} 
$m(u_1,u_2) = \mathbf{m_1}$,
$m(u_2,u_3) = \mathbf{m_2}$,
$m(u_3,u_4) = \mathbf{m_1}$.
$$ap(v,t) = \frac{m_1^2 m_2 p_1 p_2 p_3}{(m_1-m_2)^2} \Big[ (1-m_2)^{t-1} - (1-m_1)^{t-1} - (t-1)(m_1-m_2)(1-m_1)^{t-2} \Big]$$

\smallskip\noindent\textbf{Case 3:} 
$m(u_1,u_2) = \mathbf{m_2}$,
$m(u_2,u_3) = \mathbf{m_1}$,
$m(u_3,u_4) = \mathbf{m_1}$.
$$ap(v,t) = \frac{m_1^2 m_2 p_1 p_2 p_3}{(m_1-m_2)^2} \Big[ (1-m_2)^{t-1} - (1-m_1)^{t-1} - (t-1)(m_1-m_2)(1-m_1)^{t-2} \Big]$$

All of the three cases are the same!
}

\section{MIA Algorithms for IC-M}\label{sec:algo}
The aforementioned greedy approximation algorithm is too inefficient to use in practice as it lacks of a way to efficiently compute influence spread in general graphs (Sec.~\ref{sec:model}). 
To circumvent such inefficiency, we propose two MIA-based heuristic algorithms.
The first algorithm is {MIA-M} (Maximum Influence Arborescence for IC-M) which uses the dynamic programming in Theorem~\ref{thm:ap} to compute {exact} influence of seeds.
The second one is {MIA-C} (Maximum Influence Arborescence with Converted propagation probabilities) which first estimates propagation probabilities for pairwise users by combining meeting events, influence events, and the deadline $\tau$, and then uses MIA for IC to select seeds.

Both algorithms first construct a maximum influence in-arborescence (MIIA) for each node in the graph, we calculate influence propagated through these MIIAs to approximate the influence in the original network.

\subsection{The MIA-M Algorithm}\label{sec:miam}
Before describing the algorithm, we first introduce some necessary notations.
For a pair of nodes $u,v$, let $\mathcal{P}(u,v)$ be the set of all paths from $u$ to $v$ in $G$.
Given a path $\P = \langle u = u_1, \dotsc, u_l = v \rangle \in \mathcal{P}(u,v)$, its propagation probability
$$
pp(\P) = \prod_{i=1}^{l-1} p(u_i, u_{i+1}).
$$

Next, we define the \emph{maximum influence path} from $u$ to $v$ to be
$$
\MIP(u,v)=\argmax_{\P \in \mathcal{P}(u,v)} pp(\P).
$$
Note that $\MIP(u,v) = \emptyset$ if $u=v$ or $\mathcal{P}(u,v) = \emptyset$.
In addition, we require at most one $\MIP(u,v)$ for each $u,v$ pair, with ties broken in a consistent way.
To compute MIPs, notice that if we transfer influence probability $p(u,v)$ into edge weight $-\log p(u,v)$, computing $\MIP(u,v)$ is equivalent to finding the shortest path from $u$ to $v$ in $G$, and this can be done efficiently by Dijkstra's algorithm.

For MIA-M, we also introduce the ``augmented'' length $\ell_A(\P)$ of a path $\P$ to take meeting events and the deadline constraint into account.
Consider an edge $(u_i,u_j) \in \P$.
Due to random meeting events, after $u_i$ activates at step $t$, its influence will not propagate to $u_j$ exactly at $t+1$.
Instead, the propagation may take multiple steps and the number of such steps is a random variable $X_{i,j}$, which can also be interpreted as the number of Bernoulli trials needed to get the first meeting between $u_i$ and $u_j$ after $u_i$'s activation (See also Sec~\ref{sec:chain}).
Clearly, $X_{i,j}$ follows the geometric distribution, with success probability $m(u_i,u_j)$, expectation $\frac{1}{m(u_i,u_j)}$, and standard deviation $\frac{\sqrt{1-m(u_i,u_j)}}{m(u_i,u_j)}$.
Here we propose to estimate the value of $X_{i,j}$ by
$$\frac{1}{m(u_i,u_j)} - \frac{\sqrt{1-m(u_i,u_j)}}{m(u_i,u_j)},$$
and define the \emph{augmented path length} $\ell_A(\P)$ of $\P$ to be the sum of all estimated values of the random variables (one per edge) along $\P$:
$$
\ell_A(\P) = \sum_{(u_i,u_j) \in \P} \Biggr( \frac{1}{m(u_i,u_j)} - \frac{\sqrt{1-m(u_i,u_j)}}{m(u_i,u_j)} \Biggr).
$$
We empirically verify that this is a good choice for $\ell_A(\P)$. 
 
\subsubsection*{Constructing Arborescences}
For any node $v$ in $G$, we approximate the influence to $v$ from all $u\in V\setminus\{v\}$ using the \emph{maximum influence in-arborescences (MIIA)} of $v$.
To construct the MIIA rooted at $v$, we first take the union over the maximum influence paths to $v$ over all $u \in V\setminus\{v\}$.
After that, two pruning steps will be done.
First, we remove paths whose propagation probability is below a pre-defined influence threshold $\theta \in (0,1]$, which controls the size of the local influence region and is a trade-off between efficiency and seed set quality.
Second, to take the effect of deadline into account, we eliminate paths whose augmented length is greater than $\tau$.

\begin{definition}[Maximum Influence In-Arborescence]\label{def:miia}
	Given an influence threshold $\theta \in (0,1]$ and a deadline constraint $\tau \in \mathbb{Z}_+$, the maximum influence in-arborescence of any node $v \in V$ is
	\begin{equation*}
	\MIIA(v,\theta) = \cup_{u\in V, pp(\MIP(u,v))\geq \theta,\ell_A(\MIP(u,v))\leq \tau} \MIP(u,v)
	\end{equation*}
\end{definition}

The full MIA-M is described in Algorithm~\ref{alg:mia-m}, where $MG(u) = \sigma_\tau(S\cup\{u\}) -\sigma_\tau(S)$ is the marginal influence of $u$ w.r.t to seed set $S$, $MG(u,v)$ is the marginal influence of $u$ on a specific $v$, and $realized(v)$ is the cumulative influence realized on $v$ by $S$.
Also, for each $u \in V$, $\InfSet(u) = \{v \in V : u \in \MIIA(v,\theta)\}$.

After constructing $\MIIA(v,\theta)$ and using Theorem~\ref{thm:ap} to obtain $\sigma_\tau(\{v\})$ for all $v \in V$ (lines 4-10), the algorithm selects $k$ seeds iteratively in a greedy manner, and uses Theorem~\ref{thm:ap} to update the marginal gain of nodes in related MIIAs (lines 11-20).
Specifically, after $u$ is picked as a seed, the activation probability of all $v \in \InfSet(u)$ goes up, and thus we need to update the marginal gain of all $w \in \MIIA(v)$, $\forall v \in \InfSet(u)$.


\begin{algorithm}[h!]
\small
\caption{MIA-M ($G=(V,E)$, $k$, $\theta$, $\tau$)}\label{alg:mia-m}
$S \gets \emptyset$\;
$\forall v \in V, MG(v) \gets 0$ and $realized(v) \gets 0$\;
$\forall v \in V, \MIIA(v,\theta) \gets \emptyset$ and $\InfSet(v) \gets \emptyset$\;

\ForEach{$v \in V$}{
	Compute $\MIIA(v,\theta)$ (Definition~\ref{def:miia})\;
	\ForEach{$u \in \MIIA(v,\theta)$}
	{
		$\InfSet(u) \gets \InfSet(u)\cup\{v\}$\;
		Compute $ap(v,t,\{u\},\MIIA(v,\theta))$, $\forall t \le\tau$ (Theorem~\ref{thm:ap})\;
		$MG(u,v) \gets \sum_{t=0}^\tau ap(v,t,\{u\},\MIIA(v))$\;
		$MG(u)\gets MG(u) + MG(u,v)$\;
	}
}
\For{$i = 1 \rightarrow k$}
{
	$u \gets \argmax_{v \in V \setminus S} MG(v)$\; 
	$S \gets S \cup \{u\}$\;
	\ForEach{$v \in \InfSet(u)$}
	{
		$realized(v)$ += $MG(u,v)$\;
		\ForEach{$w \in \MIIA(v,\theta)$}
		{
			Compute $ap(v,t,S\cup\{w\},\MIIA(v,\theta))$, $\forall t\le\tau$ (Theorem~\ref{thm:ap})\;
			$MG_{new}(w,v) \gets \big[\sum_{t=0}^\tau ap(v,t,S\cup\{w\},\MIIA(v,\theta))\big] - realized(v)$\;
			$MG(w)\gets MG(w) + MG_{new}(w,v) - MG(w,v)$\;
			$MG(w,v) \gets MG_{new}(w,v)$\;
		}
	}
}
\end{algorithm}

\smallskip\noindent\textbf{Time Complexity:}
Let $n_{m\theta} = \max_{v\in V} |\MIIA(v,\theta)|$, and $n_{s\theta} = \max_{v \in V} |\InfSet(v)|$.
Also suppose that the maximum running time to compute $\MIIA(v,\theta)$ for any $v \in V$ by Dijkstra's algorithm is $t_{m\theta}$.
Thus, MIA-M runs in $O(|V|(t_{m\theta}+n_{m\theta}\tau^3) + k n_{m\theta} n_{s\theta} (n_{m\theta} \tau+\log|V|))$.

\subsection{The MIA-C Algorithm}\label{sec:miac}
We now discuss our second algorithm, MIA with Converted propagation probability (MIA-C).
It consists of two steps.
First, for each $(u,v) \in E$, we estimate a converted propagation probability
	$p_c(u,v)$ that incorporates meeting probability $m(u,v)$, 
	influence probability $p(u,v)$, and deadline $\tau$, with the intention
	to \textsl{simulate} the influence spread under the IC-M model
       in the original IC model.
Second, after obtaining all $p_c(u,v)$, we treat these converted probabilities as parameters for the IC model and run the MIA algorithm proposed for IC to select $k$ seeds. 

In the IC-M model with deadline $\tau$, the value of $p_c(u,v)$ depends on $p(u,v)$, $m(u,v)$, and $\tau$.
We use the following conversion function to obtain $p_c(u,v)$:
\begin{equation}\label{eq:conver}
	p_c(u,v) = p(u,v) \cdot [1-(1-m(u,v))^{\beta}],
\end{equation}
where $\beta \in [1,\tau]$ is the parameter used to estimate the number of
	meeting attempts.
If $\beta$ is $1$, $p_c(u,v) = p(u,v)\cdot m(u,v)$, in which case we are pessimistic that $u$ has only one chance to meet $v$ (the minimum possible, assuming $u$ itself activates before $\tau$).
On the other hand, if $\beta$ is $\tau$, $p_c(u,v) = p(u,v)\cdot [1-(1-m(u,v))^{\tau}]$, for which we are optimistic that $u$ has $\tau$ chances to meet $v$ (the maximum possible).
To achieve a balanced heuristic, we let $\beta = \frac{\tau}{2}$ for all pairs of $u,v$, and experiments show that this estimation turns out to be more effective than other choices in most cases.

After the probability conversion step, we utilize MIA (Algorithm 4, \cite{ChenWW10}) to find the seed set, making MIA-C take the advantage of updating marginal gains of nodes in an extremely efficient manner.

The time complexity of converting probabilities is $O(|E|)$, and second part of MIA-C has the same time complexity as MIA, which is $O(|V| t_{m\theta} + k n_{m\theta} n_{s\theta}(n_{m\theta} + \log |V|))$.

\section{LDAG Algorithms for LT-M}\label{sec:ldag}
In this section, we present an dynamic programming
	algorithm that computes exact influence spread
	in directed acyclic graphs (DAGs), and use
	it to develop our heuristic solution based on
	DAGs.
Chen et al.~\cite{ChenYZ10} proved that computing
	exact influence spread in general graphs for any
	node set is \SPhard.
Now we show that in directed acyclic graphs, computing
	influence can be done in time linear to the size of
	the graph.

\subsection{Fast Influence Computation in DAGs}

Consider an acyclic subgraph
	$D=(V,E)$.
Let $S\subseteq V$ be the seed set.
For any $v\in V$, let $ap_D(u,t\mid S)$ be the {\em activation
	probability} of $u$ and step $t$, that is, the probability
	that $u$ is activated right at step $t$ in DAG $D$
	under the LT-M model, given the seed set $S$.
When the notations are clear from the context, we write
	$ap(v,t)$ for short.
By definition, $ap(v,t) = 1$ if $v\in S \wedge t=0$.
Similarly, $ap(v,t) = 0$ if $v\in S\wedge t>0$, or
	$v\not\in S\wedge t=0$.
These cases form the basis for the recursion
	that we will develop for the dynamic programming
	algorithm, which computes $ap(v,t)$ when
	$v\not\in S\wedge t>0$.

The following theorem shows the important linear
	property of activation probability in a DAG.
The proof for this theorem also requires Lemma~\ref{lemma:le}.
\begin{theorem}\label{thm:ltap}
For any $v\in V\setminus S$, and any $t\in[1,\tau]$,
	the {\em activation probability} of $v$ at $t$
	is
\begin{equation}\label{eq:ltap}
ap(v,t) = \sum_{u\in\Nin(v)}b(u,v)\sum_{t'=0}^{t-1} ap(u,t')\cdot m(u,v)\cdot (1-m(u,v))^{t-t'-1}.
\end{equation}
\end{theorem}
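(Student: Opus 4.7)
The plan is to reduce the LT-M diffusion on $D$ to the equivalent live-edge process with meeting events (LE-M) using Lemma~\ref{lemma:le}, and then decompose the event ``$v$ becomes active exactly at step $t$'' according to which in-neighbor $v$ chose as its live edge and at which step that in-neighbor was activated. Because we are in a DAG, the live-edge choice made by $v$ together with the meeting events on the single edge $(u,v)$ are independent of the entire random process that determines $ap(u,t')$ for any $u \in \Nin(v)$ and any $t' < t$: the activation of $u$ only depends on live-edges chosen by nodes other than $v$ and on meeting events on edges other than $(u,v)$.

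Under the LE-M view, every node picks at most one live in-neighbor, so the events ``$v$ picked $u$'' for different $u \in \Nin(v)$ are mutually exclusive, and ``$v$ picked no in-neighbor'' contributes $0$ to $ap(v,t)$ since then $v$ can never become active. Fix $u \in \Nin(v)$; conditioned on $v$'s live edge being $(u,v)$, the node $v$ activates at step $t$ if and only if $u$ activates at some step $t' \in \{0,\dotsc,t-1\}$ and the first meeting between $u$ and $v$ after $u$'s activation occurs at step $t$. The events corresponding to different values of $t'$ are again mutually exclusive.

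For the first-meeting event, conditional on $u$'s activation at step $t'$, meetings between $u$ and $v$ are i.i.d.\ Bernoulli($m(u,v)$) trials at steps $t'+1, t'+2, \dotsc$. The probability that the first success lands at step $t$ is $(1-m(u,v))^{t-t'-1}\cdot m(u,v)$, which is well defined for all $t \ge t'+1$ (and equals $m(u,v)$ when $t=t'+1$). Multiplying the three independent factors $b(u,v)$, $ap(u,t')$, and $(1-m(u,v))^{t-t'-1}\, m(u,v)$, and summing over $t'$ and $u$, yields Equation~\eqref{eq:ltap}.

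The main obstacle is justifying the conditional independence cleanly: I need to make explicit that in a DAG, the sigma-algebra generated by $\{ap(u,t')\}_{u\in\Nin(v),\,t'<t}$ is independent of $v$'s live-edge selection and of the meeting coins on the edges incident to $v$, so that the product decomposition is legitimate. I plan to handle this by invoking the principle of deferred decisions as in the proof of Lemma~\ref{lemma:le}: fix the outcomes of all live-edge selections and meeting events on edges not incident to $v$ first, observe that under a DAG these already determine the distribution of $ap(u,t')$ for each $u \in \Nin(v)$, and then reveal $v$'s live-edge and the $(u,v)$-meeting coins. Everything else reduces to a routine calculation.
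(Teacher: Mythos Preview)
Your proposal is correct and follows essentially the same approach as the paper: reduce to LE-M via Lemma~\ref{lemma:le}, condition on which in-neighbor $v$ selects as its live edge, then condition on the activation time of that in-neighbor, and finish with the geometric waiting time for the first meeting. If anything, you are more careful than the paper about the key independence step---the paper simply asserts $\Pr[R(v,t)\mid \ev_{u,v}] = \sum_{t'} ap(u,t')\,m(u,v)(1-m(u,v))^{t-t'-1}$ without explicitly invoking the DAG structure, whereas you correctly identify that acyclicity is exactly what guarantees $u$'s activation history is independent of $v$'s live-edge choice and the meeting coins on $(u,v)$. One small wording fix: $ap(u,t')$ is a number, not a random variable, so speak of the event ``$u$ is activated at step $t'$'' rather than ``the sigma-algebra generated by $\{ap(u,t')\}$.''
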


\begin{proof}
By Lemma~\ref{lemma:le}, the event that a certain node $v$
	becomes activate at a certain step $t$ has the same probability
	to happen under the LT-M model and the LE-M model.
Let $\ev_{u,v}$ be the event that $v$ selects edge $(u,v)$
	as a live edge.
By the protocol of the live-edge selection process, we have 
	$\Pr[\ev_{u,v}] = b(u,v)$.
Let $R(v,t)$ be the event that $v$ is reached by $S$ at step $t$. 
Then, for all $v\in V\setminus S$,
\begin{equation}\label{eq:ap1}
ap(v,t) = \sum_{u\in V\setminus\{v\}} \Pr[\ev_{u,v}]\cdot \Pr[R(v,t)\mid \ev_{u,v}].
\end{equation}

In order to activate $v$ at $t$, the selected $u$ must be
	already active before or at $t-1$.
Furthermore, the events that $u$ activates at possible
	steps $0,1,\dotso$ are mutually exclusive.
Hence, the probability that event $R(v,t)$ happens conditioned on
	$\ev_{u,v}$ is
\begin{equation}\label{eq:ap2}
\Pr[R(v,t)\mid \ev_{u,v}] = \sum_{t'=0}^{t-1} ap(u,t')\cdot m(u,v)\cdot (1-m(u,v))^{t-t'-1}.
\end{equation}

Putting Eq.~\eqref{eq:ap1} and \eqref{eq:ap2} together, we obtain Eq.~\eqref{eq:ltap}.
This completes the proof.
\end{proof}

A directed application of Theorem~\ref{thm:ltap} gives us
	a dynamic programming method to compute
	activation probabilities.
\eat{
Notice, however, that time complexity of computing
	$ap(v,t)$, given as input a particular pair of
	$v\in V\setminus S$ and $t\le \tau]$,
	is exponential to the size of the input: $\Theta(\log\tau)$ bits.
In principle, this does not affect efficiency much
	since in general $\tau$ is much smaller than
	the size of the graph.
}

\subsection{The LDAG-M Heuristic Algorithm}
In this subsection, we propose a DAG-based heuristic
	algorithm, called LDAG-M, which leverages the proposed dynamic
	programming approach to compute influence efficiently.
The algorithm first constructs a local directed acyclic
	graph (LDAG) for each node $v$ in the graph.
These LDAGs are small subgraphs of the original network.
Then, LDAG-M uses influence propagated through
	these LDAGs to approximate the influence propagated
	in the original network.

\subsubsection*{Local DAG Construction}

Let $Inf^G_\tau(v,\{u\})$ be the probability that $v$ gets
	activated in $G$ when $u$ is the only seed.
Similarly, let $Inf^D_\tau(v,\{u\})$ be such probability in a
	DAG $D$.
Given an influence threshold $\lambda$, a node $v$,
	we wish to find a DAG $D=(U,F)$ such that
	$v\in U\subseteq V$, $F\subseteq E$, and $Inf^D_\tau(v,\{u\}) \ge \lambda, \forall u\in U$.
There are possibly more than one DAGs satisfying the above
	conditions, but ideally, we want to find an optimal
	DAG $D^*$ such that $\sum_{u\in U^*} Inf^{D^*}_\tau(v,\{u\})$
	is the largest among all possible choices.
Unfortunately, this is \NPhard{}~\cite{ChenYZ10}.
Therefore, we construct LDAGs heuritically in a greedy fashion.

We first run the Find-LDAG algorithm due to~\cite{ChenYZ10} (Algorithm~\ref{alg:ldag}).
It starts from an empty node set $U$ and an empty edge set $F$.
All $Inf(v,\{u\})$ are initialized to zero, except that
	$Inf(v,\{u\}) = 1$.
It repeatedly picks the node $w$ that has the
	largest influence value to $v$, and add
	edges from $w$ to existing nodes in $U$ into $F$.
Then it adds $w$ into $U$, and updates the influence values
	of all of $w$'s in-neighbors.
Find-LDAG terminates when no new node has influence
	value of at least $\lambda$.

\begin{algorithm}[t!]
\caption{Find-LDAG ($G=(V,E)$, $v$, $\lambda$)}\label{alg:ldag}
$U \gets \emptyset$; $F \gets \emptyset$\;
\ForEach {$u\in V\setminus\{v\}$} {
 	$Inf(v,\{u\})\gets 0$\;
}
$Inf(v,\{v\})\gets 1$\;
\While {$\max_{u\in V\setminus U} Inf(v,\{u\}) \ge \lambda$} {
	$w \gets \argmax_{u\in V\setminus U}Inf(v,\{u\})$\;
	$F \gets F\cup \{(w,x): x\in U\}$\;
	$U \gets U\cup \{w\}$\;
}
\ForEach{$u \in \Nin(w)$} {
	$Inf(v,\{u\})$ += $b(u,w)\cdot Inf(v,\{w\})$\;
}
$LDAG(v,\lambda) \gets (U,F)$ and output it\;
\end{algorithm}

This algorithm suffices for the original influence maximization
	problem under the LT model.
However, in our case, it is very important for us to take meeting
	events and the deadline constraint $\tau$ into account.
To this end, we need to post-process the LDAGs find
	by Algorithm~\ref{alg:ldag}.
First, for each edge $(u,w)\in F$, we obtain transfer the meeting probability $m(u,w)$
	into a distance weight
$\frac{1}{m(u,w)} - \frac{\sqrt{1-m(u,w)}}{m(u,w)}$.
Then, for each $u\in U\setminus\{v\}$, we compute its shortest path
	to $v$ using Dijkstra's algorithm, and let this path be $\P(u,v)$.
Next, we calculate augmented path length of $\P(u,v)$,
	which is $\ell_A(\P) = \sum_{(u_i,u_j) \in \P(u,v)} \Big( \frac{1}{m(u_i,u_j)} - \frac{\sqrt{1-m(u_i,u_j)}}{m(u_i,u_j)} \Big).$
The pruning condition is that if $\ell_A(\P) > \tau$, then node $u$
	and all corresponding edges are removed from $LDAG(v,\lambda)$.
Notice that for a DAG, we can do a topological sort on all nodes,
	so the order of pruning can be determined by that,
	with the $v$ sorted first.
This process is reflected in line 4 of Algorithm~\ref{alg:ldagm}.

\subsubsection*{Seed Set Selection and Full LDAG-M Algorithm}

The full LDAG-M algorithm first uses the approach described
	above to find out a suitable local DAG for all nodes
	in the graph (lines 3-4), then it greedily select at most $k$ seeds,
	given the input $G=(V,E)$ and the budget $k$.
The pseudo-code of LDAG-M is given in Algorithm~\ref{alg:ldagm}.  
For each node $u$, we also maintain a data structure called $InfSet(u) := \{v\in V\colon u\in LDAG(v,\lambda) \}$,
	a value $MG(u)$ which denotes the incremental influence
	by adding $u$ to the current seet set $S$, and a value
	$realized(v)$ denotes the cumulative influence realized on
	$v$ by the seed set $S$.
After constructing LDAGs and InfSets, we use Theorem~\ref{thm:ltap}
	to compute obtain $\sigma_\tau(\{u\}$ for all $u$ in the graph
	(lines 7-9).
Line 10-19 iteratively pick $k$ seeds in a greedy manner.
After selecting a new seed $u$, we need to update the incremental 
	influence of all $w$ in $LDAG(v,\lambda), \forall v\in InfSet(u)$.

\eat{
\smallskip\noindent\textbf{The LDAG-C Algorithm.}
Similar to the MIA-C heuristic described in Section~\ref{sec:miac},
	we can also convert the influence weights in the LT-M model
	using the conversion function \eqref{eq:conver}
\begin{equation}\label{eq:converlt}
	b_c(u,v) = b(u,v) \cdot [1-(1-m(u,v))^{\beta}],
\end{equation}
}

\begin{algorithm}[t!]
\caption{LDAG-M ($G=(V,E)$, $k$, $\lambda$)}\label{alg:ldagm}
$S \gets \emptyset$;
$\forall v\in V: MG(v)\gets 0$;
$realized(v)\gets 0$\;
\ForEach {$v\in V$} {
 	Compute $LDAG(v, \lambda) \quad$ (Algorithm~\ref{alg:ldag})\;
	Post-process (prune) $LDAG(v, \lambda)$\;
	\ForEach {$u\in$ $LDAG(v,\lambda)$} {
        	$InfSet(u) \gets v$\;
		Compute $ap(v,t|\{u\}), \forall t\le \tau\,$ (Theorem~\ref{thm:ltap})\;
		$MG(u,v) \gets \sum_{t=0}^\tau ap(v,t|\{u\})$\;
		$MG(u)$ += $MG(u,v)$\;
	}
}
\For {$i= 1\to k$} {
 	$u\gets \argmax_{v\in V\setminus S} MG(v)$\;
	$S\gets S\cup \{x\}$\;
	\ForEach{$v\in InfSet(u)$} {
		$realized(v)$ += $MG(u,v)$\;
		\ForEach{$w\in LDAG(v)$} {
			Compute $ap(v,t | S\cup\{w\}), \forall t \le \tau$ (Theorem~\ref{thm:ltap})\;
			$MG_{new}(w,v) \gets  \sum_{t=0}^\tau ap(v,t | S\cup\{w\}) - realized(v)  $\;
			$MG(w)$ += $MG_{new}(w,v) - MG(w,v)$\;
			$MG(w,v)\gets MG_{new}(w,v)$\;
		}
	}
}
\end{algorithm}

\section{Empirical Evaluations}\label{sec:exp}
\begin{table} 
	\centering
	\begin{tabular}{|l | c | c | c | c |}
		\hline
		\textbf{Dataset} & \hept & \wiki & \epi & \dblp \\ \hline
		Number of nodes & 15K & 7.1K & 75K & 655K \\ \hline
		Number of edges & 62K & 101K & 509K & 2.0M  \\ \hline
	    Average degree & 4.12 & 26.6 & 13.4 & 6.1  \\ \hline
	    Maximum degree & 64 & 1065 & 3079 & 588  \\ \hline
	    \#Connected components & 1781 & 24 & 11  & 73K  \\ \hline
		Largest component size & 6794 & 7066 & 76K & 517K  \\ \hline
	 	Average component size & 8.55 & 296.5 & 6.9K  & 9.0  \\ \hline
	\end{tabular}
\caption{Statistics of Real-world Networks.}
	 \label{table:dataset}
	\vspace{-3mm}
\end{table}

We conduct experiments on four real-world datasets to evaluate MIA-M and MIA-C, and compare them to a few other algorithms in terms of seed set quality and running time.
All experiments are conducted on a server running Microsoft Windows Server 2008 R2 with 2.33GHz Quad-Core Intel Xeon E5410 CPU and 32G memory.

\subsection{Experiment Setup}

\paragraph{Dataset Preparation.}
The statistics of the datasets are summarized in Table~\ref{table:dataset}.
NetHEPT is a standard dataset in this area: It is a collaboration network extracted from the High Energy Physics Theory section (1991 to 2003) of the arXiv e-print repository (http://www.arxiv.org/).
The network data is publicly available at http://research.microsoft.com/en-us/people/weic/projects.aspx.
DBLP (http://www.informatik.uni-trier.de/\~ley/db/) is a much larger collaboration network from the DBLP computer science bibliography server maintained by Michael Ley.
Nodes in both datasets represent authors, and if $u$ and $v$ collaborated at least once, we draw direct arcs $(u,v)$ and $(v,u)$.
Note that edges in the NetHEPT and DBLP graphs may carry multiplicity greater than $1$, because two authors might co-author more than one papers.

WikiVote is a who-votes-on-whom network extracted from Wikipedia, the free online encyclopedia where users can interact with each other when they co-edit the same entries.
If $v$ voted on $u$ for promoting $u$ to adminship, we draw a directed arc $(u,v)$ to reflect $u$'s influence on $v$.
Epinions is a who-trusts-whom social network from the Epinions consumer review site (http://www.epinions.com/).
We draw a directed arc $(u,v)$ if $v$ expressed her trust in $u$'s reviews explicitly on the Website.
Both the WikiVote and Epinions network data are can be obtained from the Stanford Network Analysis Project Website (http://snap.stanford.edu/data/).

\paragraph{Graph Parameters (Models for Assigning Influence and Meeting Probabilities).}
Influence probabilities are assigned using the Weighted Cascade (WC) model
	proposed in Kempe et al.~\cite{kempe03}:
For WikiVote and Epinions, $p(u,v) = 1/d^{in}(v)$ where
	$d^{in}(v)$ is the in-degree of $v$.
For NetHEPT and DBLP, $p(u,v) = A(u,v) / A(v)$ where $A(u,v)$ is the number
	of papers in which $u$ and $v$ were co-authors, and $A(v)$ is the
	number of papers that $v$ published in total. 
We also do experiments on the Trivalency (TV) model proposed in
	Chen et al.~\cite{ChenWW10}: On every edge $(u,v)$, we choose its edge
	probability uniformly at random from the set $\{0.001, 0.01, 0.1\}$.

For meeting probabilities, it is reasonable to deem that the more friends
	an individual $u$ has, the smaller the chance that $u$
	could meet a certain friend is.
Therefore, we assign each edge $(u,v)$ its meeting probability
	$m(u,v) = \frac{c}{d^{out}(u)+c}$ , where $d^{out}$ is the out-degree of $u$
	and $c$ is a constant chosen to be $5$ here.
In addition, we test on cases where for every edge $(u,v)$, $m(u,v)$ is
	chosen uniformly at random from the set
	$\{0.2, 0.3, 0.4, 0.5, 0.6, 0.7, 0.8\}$.

\paragraph{Algorithms Compared.}
We evaluate MIA-M, MIA-C, the greedy algorithm (Greedy) and the other two: Degree and MIA.
For Greedy, we apply CELF and run Monte Carlo for 10000 times as in Kempe et al.~\cite{kempe03}.
Degree is a heuristic based on the notion of degree centrality that considers high degree nodes influential.
It outputs the top-$k$ highest out-degree nodes as seeds (Kempe et al.~\cite{kempe03}).
We then test MIA, one of the state-of-the-art heuristic algorithms for the standard IC model.
For the purpose of comparisons, we let MIA select seeds disregarding meeting probabilities and the deadline constraint entirely, i.e., treating $m(u,v)=1$ for all edges and $\tau = |V|$.
We also test on the Prefix-excluding MIA (PMIA) algorithm, which is a variant of MIA~\cite{ChenWW10}.
Since the results are similar, we omit it here.
MIA-M, MIA-C, and MIA all use $1/320$ as the influence threshold $\theta$, as recommended by Chen et al.~\cite{ChenWW10}.
For MIA-C, we choose $\beta = \frac{\tau}{2}$ since it gives more stable performance (compared to $1$ and $\tau$) in most cases.

\begin{figure*}
\begin{tabular}{ccc}
    	\includegraphics[width=.45\textwidth]{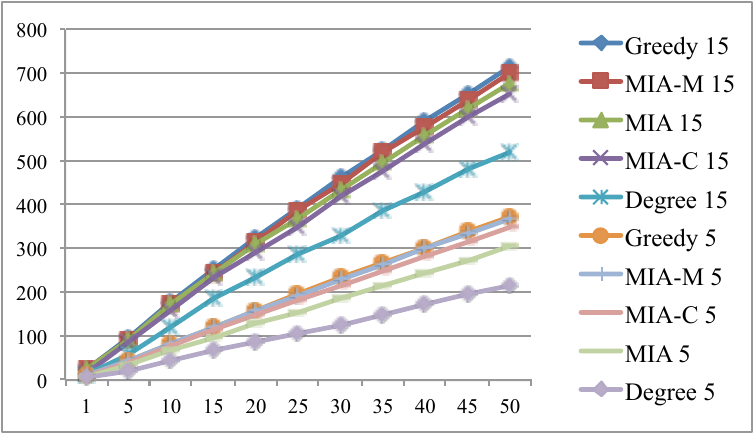}&
	\includegraphics[width=.45\textwidth]{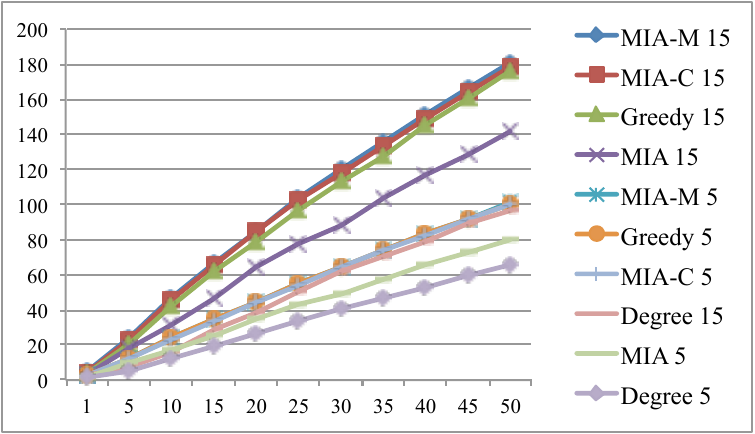}&\\
	(a) NetHEPT  & (b) WikiVote &\\
    	\includegraphics[width=.45\textwidth]{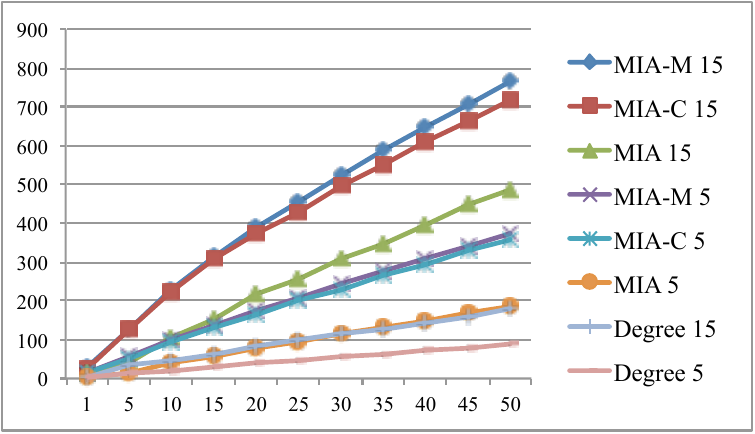}&
    	\includegraphics[width=.45\textwidth]{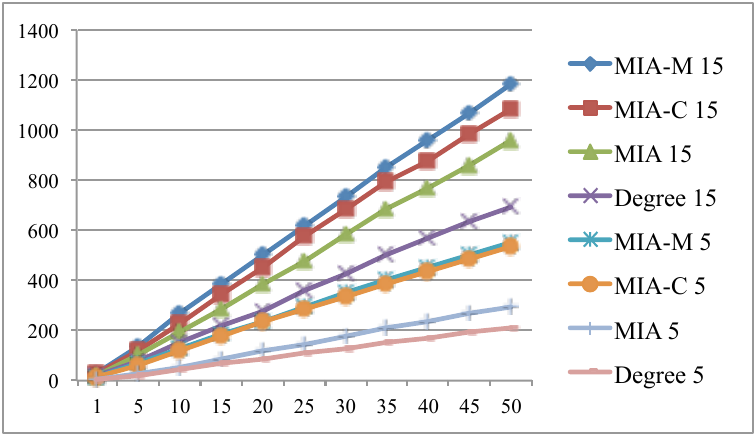}\\
	 (c) Epinions & (d) DBLP  \\
\end{tabular}
\caption{\small Influence spread (\#nodes, Y-axis) against seed set size (X-axis) on graphs with weighted meeting probabilities. }
\label{fig:infWeighted}
\end{figure*}

\begin{figure*}
\begin{tabular}{ccc}
    \includegraphics[width=.45\textwidth]{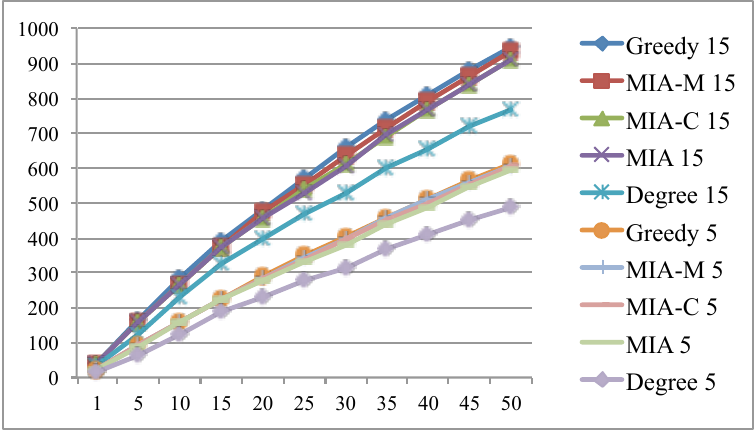}&
    \includegraphics[width=.45\textwidth]{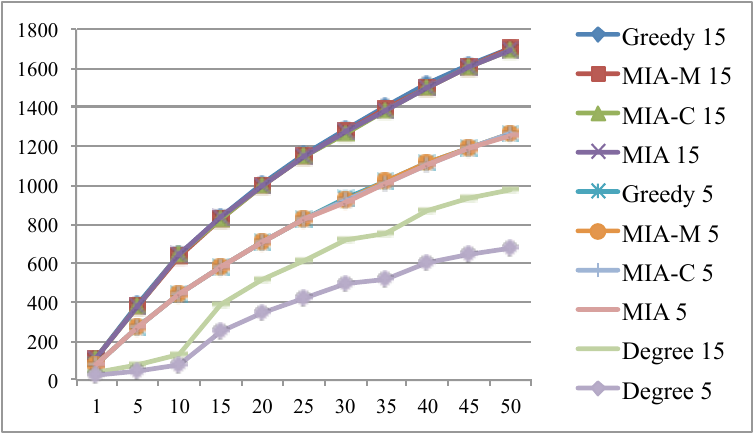}&\\
	(a) NetHEPT  & (b) WikiVote &\\ 
    \includegraphics[width=.45\textwidth]{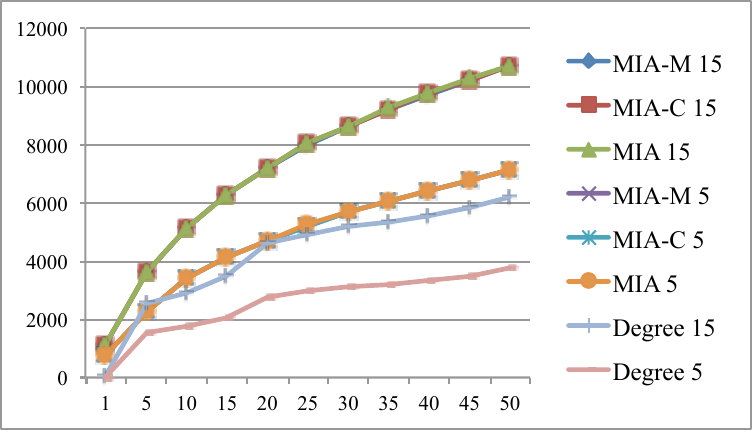}&
    \includegraphics[width=.45\textwidth]{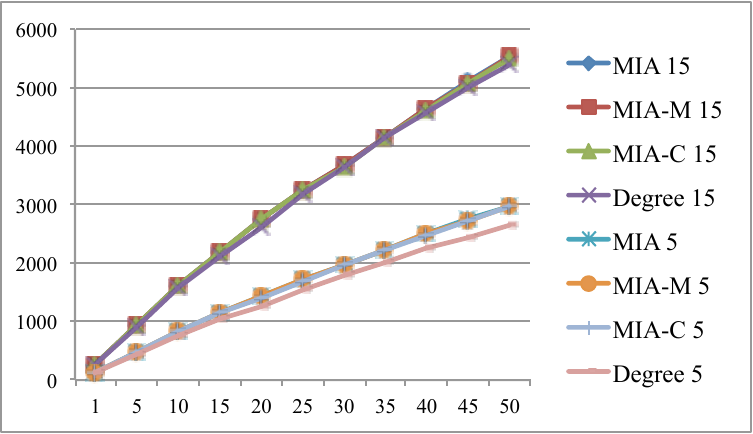}\\
	(c) Epinions & (d) DBLP  \\
\end{tabular}
\caption{\small Influence spread (\#nodes, Y-axis) against seed set size (X-axis) on graphs with uniform random meeting probabilities. }
\label{fig:infRandom}
\end{figure*}

\subsection{Results and Analysis}
We compare the five algorithms on quality of seeds sets and running time.
The deadline $\tau$ is set to $5$ (relatively short time horizon) and $15$  (relatively long time horizon) in all results reported.
Greedy is too slow to finish on Epinions and DBLP within a reasonable amount of time (3 days).

\paragraph{Quality of Seed Sets.}
The quality of seed sets is evaluated based on the expected influence spread achieved.
To ensure fair and accurate comparisons, we run MC simulations 10000 times to get the ``ground truth'' influence spread of all seed sets obtained by various algorithms.
Fig.~\ref{fig:infWeighted} and \ref{fig:infRandom} illustrate influence spread achieved on datasets with weighted and uniform random meeting probabilities, respectively.

On graphs with weighted meeting probabilities, except for Greedy, MIA-M has the highest seed set quality, while MIA-C is the second best in most test cases.
MIA-M performs consistently better than Degree and MIA, e.g., on Epinions, the influence of 50 seeds by MIA-M is $99.4\%$ ($\tau=5$) and $53.6\%$ ($\tau=15$) higher than those by MIA.
On NetHEPT and WikiVote, MIA-M produces seed sets with equally good quality as Greedy does, e.g., on WikiVote, when $\tau=5$ they both achieve influence spread of 101; when $\tau=15$, MIA-M (181) even achieves $3\%$ higher than Greedy (175).

When meeting probabilities are assigned uniformly at random, seed sets by MIA-M, MIA-C, and MIA tend to have matching influence, all being close to Greedy and better than Degree.
Most often, MIA-M is marginally better than MIA-C and MIA.
The reason why MIA catches up is that it picks seeds assuming all $m(u,v)=1$ and $\tau=|V|$, and in expectation those seeds will still have high influence under uniform random $m(u,v)$'s, because the expectation is taken over the probability space of all meeting events.
Also, when $\tau$ is large, the time-critical effect of the deadline is diminishing, so MIA tends to performs better with large $\tau$.

In reality, however, meeting probabilities between individuals in social networks may be quite different from being uniform random, and over all test cases it can be seen that MIA-M and MIA-C are more stable than MIA.
For certain meeting probabilities, MIA has poor performance.

\begin{table}
	\centering
	\begin{tabular}{|l|c|c|c|c|c|c|c|c|}
 	 \hline
 	\multirow{2}{*}{Algorithm} &
  	\multicolumn{2}{|c|}{NetHEPT} & 
  	\multicolumn{2}{|c|}{WikiVote} &
 	\multicolumn{2}{|c|}{Epinions} & 
	\multicolumn{2}{|c|}{DBLP} \\ \cline{2-9} 
	& $5$ & $15$ & $5$ & $15$ & $5$ & $15$ & $5$ & $15$  \\ \hline 
	Greedy & 40m & 1.3h & 22m & 28m & - & - & - & - \\ \hline 
    MIA-M & 1.6s & 15s & 7.9s & 43s & 47s & 5.1m & 6.6m & 10m  \\ \hline
    MIA-C & 0.3s & 0.3s & 0.4s & 0.5s & 2.7s & 3.3s & 24s & 33s  \\ \hline
    MIA & 0.3s & 0.3s & 1.4s & 1.4s & 12s & 13s & 40s & 41s \\ \hline
\end{tabular}
	\caption{Running Time (Weighted Meeting Probability)}
	\label{table:timeWeighted}
\end{table}

\begin{table}\label{table:timeRandom}
	\centering
	\begin{tabular}{|l|c|c|c|c|c|c|c|c|}
 	 \hline
 	\multirow{2}{*}{Algorithm} &
  	\multicolumn{2}{|c|}{NetHEPT} & 
  	\multicolumn{2}{|c|}{WikiVote} &
 	\multicolumn{2}{|c|}{Epinions} & 
	\multicolumn{2}{|c|}{DBLP} \\ \cline{2-9} 
	& $5$ & $15$ & $5$ & $15$ & $5$ & $15$ & $5$ & $15$  \\ \hline 
	Greedy & 44m & 1.5h & 1.1h & 3.2h & - & - & - & - \\ \hline 
    	MIA-M & 3.8s & 21.4s & 28.7s & 2.5m & 3.3m & 12.4m & 7.3m & 14.3m  \\ \hline
    	MIA-C & 0.2s & 0.2s & 0.57s & 1.23s & 4.9s & 10.2 s& 32.5s & 45.6s  \\ \hline
    	MIA & 0.4s & 0.5s & 6.5s & 6.5s & 37.9s & 39.7s & 38.9s & 43.1s  \\ \hline
\end{tabular}
\caption{Running Time fo Random Meeting Probabilities}
\label{table:timeRand}
\end{table} 
\paragraph{Running Time.}
We demonstrate the running time results on weighted meeting probability datasets in Table~\ref{table:timeWeighted} and the results on the uniform random cases in Table~\ref{table:timeRand}
Greedy takes 0.5 to 1.3 hours to finish on NetHEPT and WikiVote, and fails to complete in a reasonable amount of time (three days) on Epinions and DBLP with $\tau = 5$.
Degree finishes almost instantly in all test cases so it is not included in the table.

MIA-C and MIA are three orders of magnitude faster than Greedy, since both benefit from the linearity rule of activation probabilities when updating marginal gains~\cite{ChenWW10}.
MIA-C is more efficient because its converted probabilities are smaller than the original influence probabilities used in MIA, and hence arborescences are smaller for MIA-C under the same influence threshold ($1/320$).
MIA-M is two orders of magnitude faster than Greedy, and is scalable to large graphs like Epinions and DBLP.
It is slower than MIA-C and MIA because its dynamic programming procedure computes activation probabilities associated with steps, and hence is not compatible with the linearity rule of activation probabilities.

\begin{figure}[htb]
\centering
\includegraphics[scale=0.7]{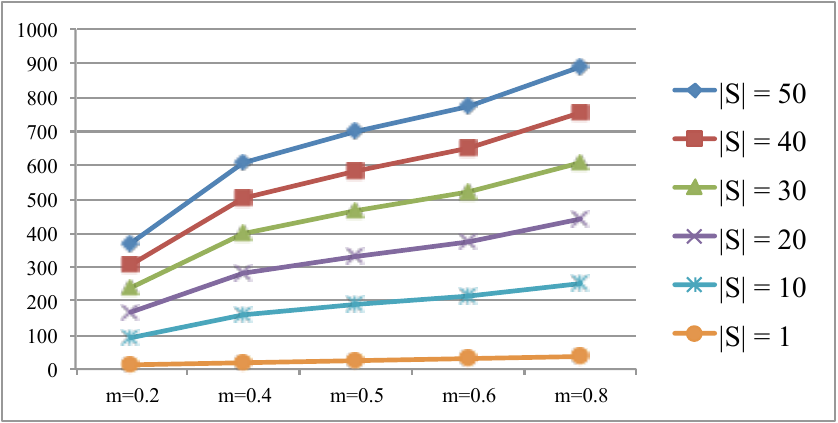}
\caption{\small Influence spread vs. meeting probabilities on NetHEPT.}
\label{fig:meetProb}
\end{figure}

\paragraph{Effects of Deadline Constraints and Meeting Probabilities.}
It can be seen from both Fig.~\ref{fig:infWeighted} and \ref{fig:infRandom} that as $\tau$ increases from 5 to 15, seed sets by all algorithms obtain higher influence spread, which is intuitive to see.
We also test other values of $\tau$, such as 10 and 20, and since the trend is the same, to avoid densely clustered figures we do not include them here.
For meeting probabilities, we conduct five test cases on NetHEPT, running Greedy with meeting probability $0.2, 0.4, 0.5, 0.6, 0.8$ for each edge in the graph. 
The results in Fig.~\ref{fig:meetProb} show that as meeting probabilities increase, the influence spread of the seed set also go up.

\section{Conclusion and Discussions}\label{sec:discuss}
In this paper, we extend the IC and LT models, and their generalization, the Triggering Set model, to include time-delayed influence
	diffusion and we consider the time-critical influence maximization problem.
We prove the submodularity of the influence function under these time-delayed models, and propose fast heuristics to
	solve the problem.
There are a number of extensions and future directions on time-critical
	influence maximization. 

\eat{
One problem is to look into more efficient computation of influence spread
	in tree structures 
	with deadline constraint $\tau$ and meeting probabilities that is 
	polynomial to $\log \tau$.
We have obtained partial results on chain graphs, a special
	class of trees, which could already improve the running
	time of MIA-M.
}

One extension is to use login probabilities to model time-delayed 
	influence diffusion, which could fit better 
	into online social networks.
Specifically, each user has a probability of entering the system, and only after this action,
	the user could be influenced by her friends who are already activated.
Incorporating these login probabilities into the current models turns out to be more
	challenging than incorporating meeting probabilities, because it introduces
	dependency in activation events.
We have obtained partial results using
	more complicated dynamic programming methods to deal with
	this case, which we include in Appendix~\ref{sec:login}.

\eat{
The third extension is to incorporate time delays in the LT model or even more
	general diffusion models. 
We are able to show submodularity under the LT model extension, and are
	looking into extensions to the general threshold model.
}

\section*{Acknowledgement}
We thank Yang Yuan (Peking University) for giving a proof for Lemma~\ref{lemma:z+1} in the appendix.
We also thank Dr.\ Nicolas J.A.\ Harvey (Department of Computer Science, UBC) and Dr.\ Ruben H.\ Zamar (Department of Statistics, UBC) for helpful conversations about results in Sec~\ref{sec:chain}.

\bibliographystyle{abbrv}
\bibliography{singlebib}


\appendix\section*{Appendix}
\section{Analysis on the summation of geometric random variables}\label{sec:proof}

In this appendix, we provide analysis of the distribution of summation of
	geometric random variables.
We first provide a proof for the negative binomial distribution for
	completeness.

\paragraph{Lemma~\ref{lem:negbinomial} (Negative Binomial Distribution, 
	re-stated).}
{\lemnegbinomial}
\begin{proof}
We prove the lemma by an induction on $\ell$.
For the base case when $\ell=1$, $X$ is just a geometric random
	variable with parameter $m$, and thus $\Pr[X=t] = (1-m)^{t-1}\,m$.
For the induction step, assume that the lemma
	holds when $\ell = z$ for some $z$.
Let $X' = \sum_{i=1}^{z} X_i$ and $X = \sum_{i=1}^{z+1} X_i$.
It is clear that when $t < z+1$, $\Pr[X=t] = 0$.
Then for the case of $\ell = z+1$ and $t \ge z+1$,
	we have
\begin{align*}
\Pr[X = t]
	&= \sum_{t'=z}^{t-1} \Pr[X'=t'] \cdot (1-m)^{t-t'-1}\, m \\
	&= \sum_{t'=z}^{t-1} \left[m^z\cdot (1-m)^{t'-z} \cdot 
	\binom {t'-1}{z-1} \right] \cdot (1-m)^{t-t'-1}\, m \\
	&= m^{z+1}\cdot(1-m)^{t-(z+1)}\cdot \sum_{t'=z}^{t-1} \binom {t'-1}{z-1} \\
 	&= m^{z+1}\cdot(1-m)^{t-(z+1)}\cdot \binom{t-1}{z} \; ,
\end{align*}
where for the second equation we have applied the induction hypothesis, and for the last one we have applied the property of binomial coefficients: $\sum_{j=k}^n \binom{j}{k} = \binom{n+1}{k+1}$.
This completes the proof.
\end{proof}

We now study the case where all geometric random variables $X_i$'s have
	distinct parameters $m_i$.
Before proving Lemma~\ref{lem:alldistinct}, we first show the following
	technical lemma.

\begin{lemma}\label{lemma:z+1}
Suppose that $m_i \neq m_j$ whenever $i \neq j$.
Then,
\begin{align}
\sum_{i=1}^\ell  \frac{(1-m_i)^{\ell-1}}{(m_i-m_{\ell+1})\cdot \prod_{j=1,j\neq i}^\ell (m_j - m_i)} = 
	\frac{(1-m_{\ell+1})^{\ell-1}}{\prod_{j=1}^\ell ( m_j - m_{\ell+1})} . \label{eqn:lemma}
\end{align}
\end{lemma}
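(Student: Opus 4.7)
The plan is to reduce the identity to a standard consequence of Lagrange interpolation. First I would simplify notation by substituting $x_i = 1 - m_i$ for $i = 1, \dotsc, \ell+1$. Under this substitution, $m_j - m_i = x_i - x_j$ and $m_i - m_{\ell+1} = x_{\ell+1} - x_i$, so equation~\eqref{eqn:lemma} becomes
\begin{equation*}
\sum_{i=1}^{\ell} \frac{x_i^{\ell-1}}{(x_{\ell+1}-x_i)\prod_{j=1,\,j\neq i}^{\ell}(x_i - x_j)}
\;=\; \frac{x_{\ell+1}^{\ell-1}}{\prod_{j=1}^{\ell}(x_{\ell+1}-x_j)},
\end{equation*}
where the $x_i$ are distinct (since the $m_i$ are distinct). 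This is purely an identity in $\ell+1$ distinct indeterminates, independent of the probabilistic setup.

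Next I would observe that the identity is equivalent to
\begin{equation*}
\sum_{i=1}^{\ell+1} \frac{x_i^{\ell-1}}{\prod_{j=1,\,j\neq i}^{\ell+1}(x_i - x_j)} \;=\; 0,
\end{equation*}
since splitting off the $i = \ell+1$ term and using the fact that each denominator $\prod_{j \neq i, j \le \ell+1}(x_i - x_j)$ for $i \le \ell$ equals $-(x_{\ell+1}-x_i)\prod_{j \neq i, j \le \ell}(x_i - x_j)$ yields exactly the desired rearrangement (a simple sign tracking step).

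Finally I would prove this zero-sum identity by Lagrange interpolation. The polynomial $p(y) = y^{\ell-1}$ has degree less than $\ell$, so its unique interpolant of degree at most $\ell$ through the $\ell+1$ distinct nodes $x_1, \dotsc, x_{\ell+1}$ is $p$ itself. Writing this interpolant in the Lagrange basis and reading off the coefficient of $y^{\ell}$, we obtain $\sum_{i=1}^{\ell+1} p(x_i) / \prod_{j \neq i}(x_i - x_j)$, which must vanish since $p$ has no $y^{\ell}$ term. This gives exactly the zero-sum identity above.

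The main obstacle, if any, is keeping the sign bookkeeping correct when moving the $i=\ell+1$ summand to the right-hand side and reconciling the two products $\prod(x_i - x_j)$ versus $\prod(x_j - x_i)$ that appear; once the substitution $x_i = 1 - m_i$ is made this becomes routine. An alternative fallback plan, if I wanted to avoid invoking Lagrange interpolation directly, would be induction on $\ell$, peeling off the index $\ell+1$ and using the induction hypothesis on the remaining $\ell$ points — but the interpolation argument is cleaner and more transparent.
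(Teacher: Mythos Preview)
Your proposal is correct and takes a genuinely different, and considerably cleaner, route than the paper's own proof.

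Both arguments begin with the same substitution $t_i = 1-m_i$ (your $x_i$). After that, the paper rewrites the identity in the equivalent form
\[
\sum_{i=1}^{z} \frac{t_i^{\,z-1}\prod_{j\neq i}(t_j-t_{z+1})}{t_{z+1}^{\,z-1}\prod_{j\neq i}(t_j-t_i)} = 1
\]
and proves it by induction on $z$: the inductive step consists of a page of algebraic manipulation that forms a telescoping difference between the $z$-term and $(z{+}1)$-term sums and shows it collapses to zero. Your approach instead recognises the identity as the vanishing of the leading Lagrange divided difference
\[
\sum_{i=1}^{\ell+1}\frac{x_i^{\ell-1}}{\prod_{j\neq i}(x_i-x_j)}=0,
\]
which follows in one line from the fact that the degree-$\ell$ Lagrange interpolant of $y^{\ell-1}$ through $\ell+1$ distinct nodes has no $y^{\ell}$ term. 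This is shorter, more conceptual, and makes the result an immediate corollary of a standard interpolation fact; the paper's induction, by contrast, is entirely self-contained and does not appeal to interpolation theory, at the cost of substantially more bookkeeping. Your only caveat about sign tracking when splitting off the $i=\ell+1$ term is real but, as you note, routine once the substitution is in place.
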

\begin{proof}
Let $t_i\overset{\text{def}}{=}1-m_i$, then Formula~\eqref{eqn:lemma} 
	(with $\ell$ switched to $z$) becomes
\begin{align*}
\sum_{i=1}^z \frac{t_i^{z-1}}{(t_i-t_{z+1})\cdot
\Pi_{j=1,j\neq i}^z (t_j-t_i)}
=\frac{t_{z+1}^{z-1}}{\Pi_{j=1}^z (t_j-t_{z+1})} .
\end{align*}

Equivalently, we need to show the following holds:
\begin{align}
\sum_{i=1}^z \frac{t_i^{z-1}\cdot
\Pi_{j=1,j\neq i}^z (t_j-t_{z+1})}
{t_{z+1}^{z-1}
\cdot
\Pi_{j=1,j\neq i}^z (t_j-t_i)}
=1.\label{show}
\end{align}

For the base case of $z=1$, it is clear that (\ref{show}) holds. 
Suppose Formula~\eqref{show} holds for some $z > 1$, now we 
show that it also holds for the case of $z+1$.
We first conduct the following manipulation:
\begin{align*}
&\frac{t_i^{z-1}\cdot
\Pi_{j=1,j\neq i}^{z} (t_j-t_{z+2})}
{t_{z+2}^{z-1}
\cdot
\Pi_{j=1,j\neq i}^{z} (t_j-t_i)}
-
\frac{t_i^{z}\cdot
\Pi_{j=1,j\neq i}^{z+1} (t_j-t_{z+2})}
{t_{z+2}^{z}
\cdot
\Pi_{j=1,j\neq i}^{z+1} (t_j-t_i)}
\\
=&
\frac{t_i^{z-1}\cdot
\Pi_{j=1,j\neq i}^{z} (t_j-t_{z+2})}
{t_{z+2}^{z-1}
\cdot
\Pi_{j=1,j\neq i}^{z} (t_j-t_i)}
\cdot 
\left(1-\frac{t_i\cdot (t_{z+1}-t_{z+2})}{t_{z+2}\cdot (t_{z+1}-t_i)}\right)\nonumber\\
=&
\frac{t_i^{z-1}\cdot
\Pi_{j=1,j\neq i}^{z} (t_j-t_{z+2})}
{t_{z+2}^{z-1}
\cdot
\Pi_{j=1,j\neq i}^{z} (t_j-t_i)}
\cdot
\frac{t_{z+2}t_{z+1}-t_{z+2}t_i-t_it_{z+1}+t_it_{z+2}}{t_{z+2}\cdot (t_{z+1}-t_i)}
\\
=&
\frac{t_i^{z-1}\cdot
\Pi_{j=1,j\neq i}^{z} (t_j-t_{z+2})}
{t_{z+2}^{z-1}
\cdot
\Pi_{j=1,j\neq i}^{z} (t_j-t_i)}
\cdot
\frac{t_{z+1}\cdot (t_{z+2}-t_i)}{t_{z+2}\cdot (t_{z+1}-t_i)}
\\
=&
\frac{t_i^{z-1}\cdot
\Pi_{j=1,j\neq i}^{z} (t_j-t_{z+2})}
{t_{z+2}^{z}
\cdot
\Pi_{j=1,j\neq i}^{z} (t_j-t_i)}
\cdot
\frac{t_{z+1}^z\cdot (t_i-t_{z+2})}{t_{z+1}^{z-1}\cdot(t_i-t_{z+1})}
\\
=&
\frac{t_{z+1}^z\cdot
\Pi_{j=1}^{z} (t_j-t_{z+2})}
{t_{z+2}^{z}
\cdot(t_i-t_{z+1})
}
\cdot
\frac{t_i^{z-1}}{t_{z+1}^{z-1}\cdot\Pi_{j=1,j\neq i}^{z} (t_j-t_i)}
\\
=&
\frac{t_{z+1}^z\cdot \Pi_{j=1}^{z}(t_j-t_{z+2})}{t_{z+2}^z\cdot \Pi_{j=1}^z(t_j-t_{z+1})}
\cdot
\frac{t_i^{z-1}\cdot \Pi_{j=1,j\neq i}^z(t_j-t_{z+1})}{t_{z+1}^{z-1}\cdot 
\Pi_{j=1,j\neq i}^z (t_j-t_i)}.
\end{align*}

With the above, we have
\begin{align}
&\sum_{i=1}^z\frac{t_i^{z-1}\cdot
\Pi_{j=1,j\neq i}^{z} (t_j-t_{z+2})}
{t_{z+2}^{z-1}
\cdot
\Pi_{j=1,j\neq i}^{z} (t_j-t_i)}
-
\sum_{i=1}^{z+1}\frac{t_i^{z}\cdot
\Pi_{j=1,j\neq i}^{z+1} (t_j-t_{z+2})}
{t_{z+2}^{z}
\cdot
\Pi_{j=1,j\neq i}^{z+1} (t_j-t_i)}
\label{substitute}
\\
=&
\sum_{i=1}^z\frac{t_i^{z-1}\cdot
\Pi_{j=1,j\neq i}^{z} (t_j-t_{z+2})}
{t_{z+2}^{z-1}
\cdot
\Pi_{j=1,j\neq i}^{z} (t_j-t_i)}
-
\sum_{i=1}^{z}\frac{t_i^{z}\cdot
\Pi_{j=1,j\neq i}^{z+1} (t_j-t_{z+2})}
{t_{z+2}^{z}
\cdot
\Pi_{j=1,j\neq i}^{z+1} (t_j-t_i)}
-
\frac{t_{z+1}^z\cdot \Pi_{j=1}^{z}(t_j-t_{z+2})}
{t_{z+2}^z\cdot \Pi_{j=1}^z(t_j-t_{z+1})} \nonumber \\
&=
\frac{t_{z+1}^z\cdot \Pi_{j=1}^{z}(t_j-t_{z+2})}{t_{z+2}^z\cdot \Pi_{j=1}^z(t_j-t_{z+1})}
\cdot
\sum_{i=1}^z
\frac{t_i^{z-1}\cdot \Pi_{j=1,j\neq i}^z(t_j-t_{z+1})}{t_{z+1}^{z-1}\cdot
\Pi_{j=1,j\neq i}^z (t_j-t_i)}
-
\frac{t_{z+1}^z\cdot \Pi_{j=1}^{z}(t_j-t_{z+2})}
{t_{z+2}^z\cdot \Pi_{j=1}^z(t_j-t_{z+1})}\nonumber\\
&=
\frac{t_{z+1}^z\cdot \Pi_{j=1}^{z}(t_j-t_{z+2})}{t_{z+2}^z\cdot \Pi_{j=1}^z(t_j-t_{z+1})}
-
\frac{t_{z+1}^z\cdot \Pi_{j=1}^{z}(t_j-t_{z+2})}{t_{z+2}^z\cdot \Pi_{j=1}^z(t_j-t_{z+1})}
=0, \label{basedonassumption}
\end{align}
where \eqref{basedonassumption} is obtained by applying the induction hypothesis.
Also notice that in (\ref{substitute}),
$$
\sum_{i=1}^z\frac{t_i^{z-1}\cdot
\Pi_{j=1,j\neq i}^{z} (t_j-t_{z+2})}
{t_{z+2}^{z-1}
\cdot
\Pi_{j=1,j\neq i}^{z} (t_j-t_i)} = 1,
$$
based again on the induction hypothesis (with $t_{z+1}$ replaced by $t_{z+2}$).

Hence, we have 
\[1-\sum_{i=1}^{z+1}\frac{t_i^{z}\cdot
\Pi_{j=1,j\neq i}^{z+1} (t_j-t_{z+2})}
{t_{z+2}^{z}
\cdot
\Pi_{j=1,j\neq i}^{z+1} (t_j-t_i)}=0
\]
This implies that equation (\ref{show}) holds for all $z\geq 1$, which completes the proof.
\end{proof}

We are now ready to prove Lemma~\ref{lem:alldistinct}.

\paragraph{Lemma~\ref{lem:alldistinct} (re-stated).}
{\lemalldistinct}

\begin{proof}
We prove the lemma by an induction on the path length $\ell$.
For the base case when $\ell=1$, $X=X_1$, and thus
	$\Pr[X=t] = (1-m_1)^{t-1} m_1$ for all $t\ge 1$.
For the induction step, 
	suppose that the lemma is true for $\ell = z$ for some $z\in \mathbb{Z}_+$.
Let $X'=\sum_{i=1}^{z} X_i$ and
	$X = \sum_{i=1}^{z+1} X_i$.
By the induction hypothesis, we have that for all $t\ge z$, 
$$
\Pr[X'=t] = \left(\prod_{i=1}^z m_i \right) \cdot \sum_{i=1}^z \frac{(1-m_i)^{t-1}}{\prod_{j = 1, j \neq i}^z (m_j-m_i)}\,.
$$

Now consider the case of $\ell = z+1$. It is clear that for $t < z+1$,
	$\Pr[X=t] =0$.
For $t \ge z+1$, the event $\{X=t\}$ is the union of mutually exclusive
	events of $\{X'=t', X_{z+1}= t - t'\}$ for all $t'=z,z+1,\ldots, t-1$.
Thus we have
\begin{align*}
\Pr[X=t] &= \sum_{t'=z}^{t-1} \Pr[X'=t']\cdot (1-m_{z+1})^{t-t'-1}\cdot m_{z+1}\\
		     &= \left( \prod_{i=1}^{z+1}m_i\right) \cdot  \sum_{t'=z}^{t-1}\,\sum_{i=1}^z \frac{(1-m_i)^{t'-1}}{\prod_{j = 1, j \neq i}^z (m_j-m_i)} \cdot (1-m_{z+1})^{t-t'-1}\,,\\
\end{align*}
where we have applied the induction hypothesis
	for the last equality.

Next, we switch the order of the two summations and get
\begin{align*}
\Pr[X=t] 
	&= \left(\prod_{i=1}^{z+1}m_i\right) \cdot \sum_{i=1}^z\,\sum_{t'=z}^{t-1} \frac{(1-m_i)^{t'-1}}{\prod_{j = 1, j \neq i}^z (m_j-m_i)} \cdot (1-m_{z+1})^{t-t'-1} \\
	&= \left(\prod_{i=1}^{z+1}m_i\right)\cdot \left[ \sum_{i=1}^z \frac{(1-m_{z+1})^{t-2}}{\prod_{j=1,j\neq i}^z (m_j-m_i)} \cdot \sum_{t'=z}^{t-1} \left( \frac{1-m_i}{1-m_{z+1}} \right)^{t'-1} \right]\\
	&= \left(\prod_{i=1}^{z+1}m_i\right)\cdot \left[ \sum_{i=1}^z \frac{(1-m_{z+1})^{t-2}}{\prod_{j=1,j\neq i}^z (m_j-m_i)} \cdot \frac{ \left(\frac{1-m_i}{1-m_{z+1}}\right)^{z-1}  - \left( \frac{1-m_i}{1-m_{z+1}} \right)^{t-1} }{1-\left(\frac{1-m_i}{1-m_{z+1}}\right)} \right] \\
\end{align*}

\begin{align}
\qquad\qquad\qquad &= \left(\prod_{i=1}^{z+1}m_i \right)\cdot \sum_{i=1}^z \frac{(1-m_i)^{z-1}(1-m_{z+1})^{t-z} - (1-m_i)^{t-1}}{(m_i-m_{z+1})\cdot \prod_{j=1,j\neq i}^{z} (m_j - m_i)} \nonumber\\
\qquad\qquad\qquad &= \left(\prod_{i=1}^{z+1}m_i \right)\cdot \left[ \sum_{i=1}^z \frac{(1-m_i)^{t-1}}{\prod_{j=1,j\neq i}^{z+1} (m_j - m_i)} +  \sum_{i=1}^z  \frac{(1-m_i)^{z-1}(1-m_{z+1})^{t-z}}{(m_i-m_{z+1})\cdot \prod_{j=1,j\neq i}^{z} (m_j - m_i)} \right] \label{eq:lastline}
\end{align}

For the second summation  term in \eqref{eq:lastline},
	we need to prove the following equation:
\begin{equation}\label{eq:z+1}
\sum_{i=1}^z  \frac{(1-m_i)^{z-1} (1-m_{z+1})^{t-z}}{(m_i-m_{z+1})\cdot \prod_{j=1,j\neq i}^{z} (m_j - m_i)} = 
	\frac{(1-m_{z+1})^{t-1}}{\prod_{j=1}^{z}( m_j - m_{z+1})} .
\end{equation}

This can be done by dividing both sides of \eqref{eq:z+1} by $(1-m_{z+1})^{t-z}$ and applying Lemma~\ref{lemma:z+1}.
Then, we substitute \eqref{eq:z+1} back into \eqref{eq:lastline} and get
$$
\Pr[X=t] = \left(\prod_{i=1}^{z+1}m_i \right) \cdot \sum_{i=1}^{z+1} \frac{(1-m_i)^{t-1}}{\prod_{j=1,j\neq i}^{z+1} (m_j - m_i)} .
$$

Thus, the lemma holds for $\ell = z+1$, and this completes the proof.
\end{proof}

\section{Time-Critical Influence Maximization with Login Events}\label{sec:login}
In online social networks, users usually will not stay online interacting with friends all the time.
For example, on Twitter or Facebook, users may respond to posts generated by friends perhaps hours or even days ago.
This time-delayed behavior can be modeled using the probabilities of logging into the social systems.
Let $\ell_u \in [0,1]$ denote the {\em login probability} of user $u\in V$.
In any time step $t$, each user $u\in U$ independently logs into the social networking site with probability $\ell_u$, or stays offline with probability $1-\ell_u$.
Next, we describe how to incorporate the login events into both IC and LT models to reflect time-delayed influence diffusion processes.

\paragraph{Independent Cascade with Login Events (IC-L)}
In the \emph{IC-L} model, we start with a seed set $S$ at time step $0$.
In every time step $t\ge 0$, a node $u$ has an independent online probability $\ell_u$.
For $s \in S$, it becomes active when logging into the system for the first time.

If a node $u$ becomes active in step $t$, then $u$ will have a single chance to influence each of its inactive neighbor $v$ at step $t' \geq t+1$, when $v$ logs in  
	for the first time after $t$. This attempt has a success probability of $p(u,v)$.
If the attempt succeeds, $v$ will become active at $t'$; 
otherwise, $u$ cannot attempt to influence $v$ even if $v$ logs in again in the future.
The diffusion process terminates either naturally, i.e., when no more nodes can be activated, or by a specific deadline, i.e., the end of time step $\tau$.


It can be easily shown that the influence maximization problem is also \NPhard{} under the IC-L model, as we can restrict all login probabilities to be $1$, making IC-L equivalent to IC.

%
%

\paragraph{Linear Thresholds with Login Events (LT-L)}
Similar to the traditional LT model, in the LT-L model, each node $v\in V$ chooses a threshold $\theta_v$ uniformly at random from $[0,1]$, and it is influenced by its neighbors $u$ based on edge weight $b(u,v)$.

The diffusion dynamics unfold as follows.
First, a seed set $S$ is targeted at time step $0$, and for any $s\in S$, $s$ becomes active
	when it first logs into the system.
In every subsequent time step $t\ge 0$, each node $v$ logs in with probability $\ell_v$.
If a node $u$ becomes active in step $t$, we say $u$ is an \emph{effective} active in-neighbor of its currently inactive out-neighbor $v$ by time $t$, meaning that $u$ can pass its influence $b(u,v)$ to $v$.
An inactive $v$ would get activated at step $t'$ if $v$ logs in and the total weight of its effective active neighbors by time $t'-1$ is at least $\theta_v$:
$$\sum_{u \in EA_v^t} b(u,v) \geq \theta_v.$$
The propagation of influence stops either naturally or by the end of deadline $\tau$.
Similarly, influence maximization under the LT-L model is also \NPhard{}, due to that
	the LT-L model subsumes LT when all login probabilities are $1$.
%

\subsection{Submodularity and Approximation Guarantees}
Kempe et al.~\cite{kempe03} shows that both IC and LT are special cases of the \emph{Triggering Set} (TS) model, in which each node $v$ randomly picks a \emph{Triggering Set} $T(v)$, a subset of its in-neighbors $\Nin(v)$, according to some distribution over subsets of $\Nin(v)$.
An inactive $v$ becomes active at time step $t$ if there exists some $u \in T(v)$ that are active by $t-1$.

Combining the TS model with deadline constraint and login probability, we propose the \emph{TS-L model}. In this model, an inactive $v$ becomes active at time step $t$ if it logs in at $t$ and notices {\sl for the first time} that there exists $u \in T_v$ such that $u$ is already active (i.e., before $t$). 
Note that TS-L generalizes IC-L and LT-L:
\begin{itemize}
\item For the IC-L model, the triggering set $T(v)$ of each user $v$ is to include every in-neighbor $u\in \Nin(v)$ independently with probability $p(u,v)$.
\item For the LT-L model, the triggering set $T(v)$ consists of at most one in-neighbor of $v$: a particular $u\in \Nin(v)$ is chosen with probability $b(u,v)$. And with probability $1-\sum_{u \in \Nin(v)} b(u,v)$, $T(v) = \emptyset$.
\end{itemize}
In all three models, the login events happen independently for all users at all time steps.
Thus, TS-L indeed includes IC-L and LT-T as special cases.

%

Next, we show that the influence spread function is submodular for the TS-L model, which then implies that submodularity also holds for IC-L and LT-L.


\begin{theorem}\label{theorem:submod}
	The influence function $\sigma(\cdot)$ is monotone and submodular under the TS-L model.
\end{theorem}


\begin{proof}
First, we fix a set $X_T$ of outcomes of triggering set selections for all nodes. In this fixed $X_T$, if $u \in T(v)$, then we declare $(u,v)$ to be \emph{live}; otherwise we declare it \emph{blocked}.
Next, for each $u$ at each time step $t \in [0,\tau]$, we independently flip a coin with bias $\ell_u$ to determine whether $u$ will log in in $t$.
Eventually, we obtain a sequence of log-in events for $u$.
Let $X_L$ be the set of such sequences of all pairs.

Any fixed $X_L$, on top of a fixed $X_T$, forms a possible world $X$ where influence propagates deterministically. 
First, consider a live-edge path $\P = \{u_0, u_1, \ldots, u_z\}$ in possible world $X$, where $z$ is the length of $\P$.
Suppose, without loss of generality, that $u_0 \in S$ and $u_i\not\in S$, for all $1\leq i\leq z$.
Note that if none of the nodes in $\P$ is a seed, such a path can be ignored in influence propagation.

Let $t(\P,u_0)$ be the time step at which $u_0$ becomes active (since $u_0\in S$, technically the activation time of $u_0$ does not depend on $\P$, and we use this notation simply for technical convenience).
For all $1 \leq i \leq z$, let $t(\P, u_i)$ be the first time step at which $u_i$ logs in after the activation of $u_{i-1}$, its predecessor on $\P$.
Hence, by model definition, $u_i$ will become active at $t(\P,u_i)$ \textsl{if we ignore other paths along which influence may propagate to $u_i$}.
Clearly, if $t(\P, u_z) \leq \tau$, then the end node $z$ becomes active by the deadline and should count toward influence spread.

%

We now define the notion of \emph{reachability} in a possible world $X$, which is slightly different from the traditional reachability in graphs due to log-in events.
We say $v$ is \emph{reachable from seed set $S$} iff (1) there exists a live-edge path $\P_{S,v}$ from some $s \in S$ to $v$, and 
(2) $t(\P_{S,v},v) \leq \tau$.
Note that if there are multiple live-edge paths from $S$ to $v$ in this possible world, we take the shortest one in terms of the activation time of $v$.


Let $\sigma_X(S)$ be the number of nodes reachable from $S$ by the reachability definition above.
Now consider two seed set $S_1$ and $S_2 \supseteq S_1$, and a node $x \in V \setminus S_2$:
$\sigma_X(\cdot)$ is clearly monotone as if $u$ can be reached by $S_1$, then the origin of the live-edge path to $u \in S_1$ must also belong to $S_2$, thus we have $\sigma_X(S_1) \leq \sigma_X(S_2)$.
For submodularity, consider a node $u$ is reachable from $S_2 \cup \{x\}$ but not $S_2$, then we conclude immediately that (1) $u$ is not reachable from $S_1$ either, and (2) the origin of the live-path to $u$ must be $x$.
Hence, $u$ is reachable from $S_1 \cup \{x\}$ but not $S_1$, and we have $\sigma_X(S_1\cup\{x\})-\sigma_X(S_1) \geq \sigma_X(S_2\cup\{x\})-\sigma_X(S_2)$.

The influence function $\sigma(\cdot)$ can be written as a nonnegative linear combination of $\sigma_X(\cdot)$ functions: for any $S \subseteq V$
$$\sigma(S) = \sum_{X} \Pr[X]\cdot\sigma_X(S)$$
where $X$ is any combination of $X_T$ and $X_M$, and $\Pr[X] = \Pr[X_T]\cdot\Pr[X_M]$.
Hence, $\sigma(\cdot)$ is also monotone and submodular, which completes the proof.
\end{proof}

\subsection{Influence Computation under IC-L Model and LT-L Model}

We now consider the problem of computing influence
	spread under these two models.
Since IC-L (LT-L) subsumes the classical IC (resp.\ LT) model
	as a special case (by setting all login probabilities to be $1$),
	computing the exact value of $\sigma(S)$ for any seed
	set $S\subseteq V$ remains \SPhard~\cite{ChenWW10,
	ChenYZ10}.
Therefore, we focus on the computation of influence spread in local
	structures, i.e., Maximum Influence Arborescence (MIA) for IC-L
	and Local Directed Acyclic Graphs (LDAG)  for LT-L.
In what follows, we give formulas for computing the activation probability
	of a node, given its local influence structure.
The construction of such local influence structures can be done
	is a similar fashion to MIA-M (Section~\ref{sec:algo}) and LDAG-M (Section~\ref{sec:ldag}),
	and we omit details here.

\subsubsection{IC-L Model: Partial Results for Computations in an In-Arborescence}
For the IC-L model, since the login probability is the property of node instead of edge, the events of users' activated by different in-neighbors are no longer independent.
For example, suppose node $v$ has two in-neighbors $u_1$ and $u_2$, and suppose that
	both $u_1$ and $u_2$ are seeds.
Then the event that $u_1$ activates $v$ by time $\tau$ is the joint event of 
	(a) $u_1$ logged in at some time $t < \tau$ and get activated;
	(b) $v$ logged in at some time $t'$ with $t < t' \le \tau$; and 
	(c) $u_1$ successfully influenced $v$ at time $t'$ when $v$ logged in.
Similarly we have the event that $u_2$ activates $v$ by time $\tau$ as the three
	parallel joint events.
When comparing these events, we can see that (a) and (c) for $u_1$ and $u_2$ are independent, 
	but not (b), since they are both for the login event of $v$.
This is different from the meeting probability model IC-M we provide before, in which case
	we would replace (b) above with $u_1$ meeting $v$ at time $t'$, which is indeed independent
	of $u_2$ meeting $v$ at time $t'$.
As the result, we need a new dynamic programming method to compute the activation
	probabilities and influence spread, even in local influence regions.

Consider an in-arborescence $H = (V_H, E_H)$ and a node $v\in V_H$.
Let $\mathcal{A}^{in}(v,t) = \{u_1, u_2, \ldots, u_m\} \subseteq \Nin_H(v)$ denote the set of $v$'s active in-neighbors (in $H$) at the beginning of time step $t$ and let $m = |\mathcal{A}^{in}(v,t)|$.
We order these nodes by their activation time (ascending order, with ties broken arbitrarily). 
Also, let $p_i$ be the influence probability $p(u_i,v)$. 
Let $T_{[1,m]} = \langle t_1,t_2, \ldots, t_m \rangle$ be the ordered sequence of activation time steps of nodes in  $\mathcal{A}^{in}(v,t)$.
Let $P_{[1,m]} = \langle p_1, p_2, \ldots, p_m  \rangle$ be the corresponding sequence of influence probabilities.

\begin{theorem}\label{thm:ap-ICL}
Consider a node $v\in V_H \setminus S$.
Given $T_{[1,m]}$ and $P_{[1,m]}$ (that correspond to a particular sequence of active in-neighbors),  $v$'s {\em activation probability} $ap(v,t,T_{[1,m]},P_{[1,m]})$, $\forall t \geq t_m$, can be computed as follows:
\begin{multline}\label{eqn:ap-ICL}
    ap(v,t,T_{[1,m]},P_{[1,m]}) = (1-(1-\ell_v)^{t_2-t_1}) \cdot(1-p_1)\cdot ap(v,t, T_{[2,m]}, P_{[2,m]}) 
            +(1-\ell_v)^{t_2-t_1}\cdot ap(v,t, T_{[2,m]}, P'_{[2,m]}),
\end{multline}
where $m\geq 3$ and $P'_{[2,m]} = 1-(1-p_1)(1-p_2) \oplus P_{[3,m]}$, with $\oplus$ denoting the operation of sequence concatenation. 
There are two base cases.
\begin{itemize}
\item If $m=1$, we have $ap(v,t, \langle t_1 \rangle, \langle p_1 \rangle)=(1-\ell_v)^{t-t_1-1}\cdot \ell_v\cdot p_1$.
\item If $m=2$, apply Equation~\eqref{eqn:ap-ICL} with $P_{[3,m]}$ being an empty sequence.
\end{itemize}
\end{theorem}


\begin{proof}
It is straightforward to verify the correctness of the two bases cases.
For $m\geq 3$, two cases arise as sub-problems:

$(i)$.
If $v$ has logged in at least once during this period with probability $1 - (1-\ell_v)^{t_2-t_1}$,
	then in order for $v$ to be active at time $t$, $u_1$ must have tried but failed to
	activate $v$, which happens with probability $1-p_1$.
Hence, $u_1$ can be disregarded and we only need to consider the subproblem on
	the remaining in-neighbor sequence $\{u_2, u_3, \ldots, u_m\}$.
This gives the first summation term in Equation~\eqref{eqn:ap-ICL}.

$(ii)$.
If $v$ never logged in during $(t_1,t_2]$,
	then, when it first logs in at some time step $t' \geq t_2+1$, $u_1$ and
	$u_2$ each has a single chance to independently activate $v$.
Their ``collective'' influence probability is $1-(1-p_1)(1-p_2)$.
This leads to the second summation term in in Equation~\eqref{eqn:ap-ICL}.

Since $(i)$ and $(ii)$ are mutually exclusive, the recursive formula
	correctly computes the activation probability.
Note that if there are $k\ge 2$ in-neighbors whose activation time steps are the same, we can ``collapse'' them into a single node, influence probability $1-\prod_{i=1}^{k}(1-p_i) $, similar to case $(ii)$ above.
\end{proof}

Computing Equation~\eqref{eqn:ap-ICL} takes $O(m)$ time, but it only works for a fixed sequence of active in-neighbors.
Given a seed set $S$, there may well be multiple possible sequences of in-neighbors to be considered, and the final activation probability of $v$ at time step $t$ is a linear combination of $ap(v,t)$ w.r.t.\ all applicable sequences.
Thus, Theorem~\ref{thm:ap-ICL} should only be regarded
	as a partial result, and further analysis is needed to obtain the final dynamic programming formula for
	computing activation probability in arborescences for the IC-L model.

Also, note that Theorem~\ref{thm:ap-ICL} is applicable to all non-seed nodes.
For seeds, dynamic programming is not required:
Consider any $s\in S$, the probability that $s$ becomes active at time step $t$
	is simply $(1-\ell_s)^t \cdot \ell_s$.

\subsubsection{LT-L Model: Computations in Directed Acyclic Graphs}
For the LT-L model,  
activation probabilities can be similarly computed as in the LT-M model (Theorem~\ref{thm:ltap}).
\begin{theorem} 
Consider any directed acyclic graph $H=(V_H, E_H)$, and a node $v\in V_H\setminus S$.
For any time step $t\in[1,\tau]$,
	the {\em activation probability} of $v$ at $t$ is
\begin{equation} 
ap(v,t) = \sum_{u\in\Nin_H(v)}b(u,v)\sum_{t'=0}^{t-1} ap(u,t')\cdot \ell_v\cdot (1-\ell_v)^{t-t'-1},
\end{equation}
where $\Nin_H(v)$ is the set of in-neighbors of $v$ in $H$.
\end{theorem}

\begin{proof}[Proof (Sketch)]
The arguments in the proof of Theorem~\ref{thm:ltap} also apply to the LT-L model.
The main observation is that in the live-edge graph model equivalent to the LT model, 
	each node $v$ only randomly select one in-neighbor $u$ and make edge $(u,v)$ live.
With only one in-coming live edge $(u,v)$, whether we consider the event of $v$ meeting 
	$u$ after $u$ is activated 
	as in the LT-M model or the event $v$ logging in after $u$ is activated at in
	the LT-L model is essentially the same.
\end{proof}

Again, for seeds, the probability that a seed $s\in S$ becomes active at time step $t$
	is simply $(1-\ell_s)^t \cdot \ell_s$.

%


\end{document}